\newcommand{\name}[0]{SCP\xspace}
\DeclareRobustCommand\onedot{\futurelet\@let@token\@onedot}
\def\@onedot{\ifx\@let@token.\else.\null\fi\xspace}
\def\ie{\emph{i.e}\onedot}
\newtheorem{definition}{Definition}
\newtheorem{lemma}{Lemma}
\newtheorem{rul}{Rule}
\DeclareRobustCommand\onedot{\futurelet\@let@token\@onedot}
\def\@onedot{\ifx\@let@token.\else.\null\fi\xspace}
\def\ie{\emph{i.e}\onedot}
\title{An Effective Branch-and-Bound Algorithm with New Bounding Methods \\
for the Maximum $s$-Bundle Problem}
\author{
    Jinghui Xue$^1$\thanks{The first two authors contribute equally.}\and 
    Jiongzhi Zheng$^{1*}$\and 
    Mingming Jin$^{1}$\and 
    Kun He$^1$\thanks{Corresponding author. Email: brooklet60@hust.edu.cn.}
    \affiliations
    $^1$School of Computer Science and Technology, Huazhong University of Science and Technology, China\\
}
\begin{document}

\maketitle

\begin{abstract}
The Maximum $s$-Bundle Problem (MBP) addresses the task of identifying a maximum $s$-bundle in a given graph. A graph $G=(V,E)$ is called an $s$-bundle if its vertex connectivity is at least $|V| - s$, where the vertex connectivity equals the minimum number of vertices whose deletion yields a disconnected or trivial graph.
MBP is NP-hard and holds relevance in numerous real-world scenarios emphasizing the vertex connectivity. Exact algorithms for MBP mainly follow the branch-and-bound (BnB) framework, whose performance heavily depends on the quality of the upper bound on the cardinality of a maximum $s$-bundle and the initial lower bound with graph reduction. 
In this work, we introduce a novel Partition-based Upper Bound (PUB) that leverages the graph partitioning technique to achieve a tighter upper bound compared to existing ones. 
To increase the lower bound, we propose to do short random walks on a clique to generate larger initial solutions. 
Then, we propose a new BnB algorithm that uses the initial lower bound and PUB in preprocessing for graph reduction, and uses PUB 
in the BnB search process for branch pruning. Extensive experiments 
with diverse $s$ values 
demonstrate the significant progress of our algorithm over state-of-the-art BnB MBP algorithms.
Moreover, our initial lower bound can also be generalized to other relaxation clique problems.
\end{abstract}

\section{Introduction}
The extraction of structured subgraphs within a graph is a critical task with numerous applications. One notable category among these is 
the clique model, which represents a thoroughly investigated subgraph structure. 
In an undirected graph, a clique constitutes a subset of vertices that induces a complete subgraph. 
However, in real-world scenarios like biological networks~\cite{k-defective-pro} and social networks~\cite{k-plex-pro}, 
dense subgraphs do not always need complete connectivity but allow some missing edges. Consequently, various types of relaxations and adaptations of the clique structure have been introduced, 
such as the quasi-clique~\cite{quasi-cliques-pro}, $s$-plex~\cite{k-plex-pro}, $s$-defective clique~\cite{k-defective-pro}, and $s$-bundle~\cite{pattillo2013clique}, where $s$ is usually a small integer, offering diverse solutions for practical applications across various domains.



A graph $G=(V,E)$ is termed an $s$-bundle if its vertex connectivity is at least $|V| - s$. Here, vertex connectivity refers to the minimum number of vertices whose removal results in a disconnected or trivial graph containing at most 1 vertex. 
The $s$-bundle structure frequently occurs in various network applications,
notably in scenarios emphasizing the vertex connectivity.
This is evident in the extraction of large and robustly connected subgroups in social networks~\cite{pattillo2013clique,veremyev2014finding}. However, recent prevailing of analyzing relaxation cliques focuses on the assessment of missing edges, such as the $s$-plex and $s$-defective clique models~\cite{k-plex-kplexS,WangZXK22,RGB,jiang2023Dise,wang2023fast,KDBB,zheng2023kd}, and there has been rather few exploration on $s$-bundle that highlights the vertex connectivity~\cite{RDS,zhou2022effective,hu2023listing}. Thus, we attempt to bridge the gap and 
address the Maximum $s$-Bundle Problem (MBP), which aims to find an $s$-bundle with the maximum number of vertices in a  graph.



MBP is NP-hard~\cite{RDS} and computationally challenging. Several exact algorithms based on the branch-and-bound (BnB) framework~\cite{LiQ10,McCreeshPT17} have been proposed for MBP. 
One is the generic Russian doll search (RDS) algorithm~\cite{RDS-proposed,RDS}, designed for diverse relaxed clique problems including MBP, and the other is a state-of-the-art algorithm called MSB~\cite{zhou2022effective}, which introduces novel upper bounds and reduction rules, and incorporates a multi-branching rule within the BnB framework.

BnB algorithms for the clique and relaxation clique problems usually maintain a progressively growing partial solution $S$ and 
the corresponding candidate set $C$ in each branching node. The algorithms calculate an upper bound $UB$ on the maximum size of feasible solutions containing $S$. The branching node can be pruned if $UB$ does not surpass the best solution found so far, which is referred to as the lower bound. 
Consequently, the effectiveness of the algorithm is significantly influenced by the quality of the upper bound as well as the initial lower bound.

MSB~\cite{zhou2022effective} introduces an upper bound by dividing the graph into independent sets, where no edge exists for any pairwise vertices. 
MSB claims that an independent set can only contribute at most $s$ vertices for an $s$-bundle, yet we found such a claim is conservative. Actually, considering the special property of $s$-bundle regarding the vertex connectivity, we find that a graph called $s$-component, where the size of its largest connected subgraph is no more than $s$, can only contribute at most $s$ vertices for an $s$-bundle. We relax the restriction of the independent set and allow each set to contain more vertices, resulting in a tighter upper bound.

Based on the above observation, we propose a new Partition-based Upper Bound (PUB) for MBP, which can help the BnB algorithm prune the branching nodes more efficiently. We further propose a new preprocessing method with a new reduction rule based on the PUB for reducing the input graph and a construction heuristic based on the short random walks~\cite{pearson1905problem} 
to calculate high-quality initial lower bound. 
Combining the above methods, we propose a new algorithm called $s$-Component Partition-based algorithm (\name) for MBP. 
Our main contributions are as follows:



\begin{itemize}
\item We propose a new Partition-based Upper Bound (PUB) for MBP, which can help obtain a tighter upper bound on the  $s$-bundle size. The PUB is used during the BnB searching process to prune the branching nodes and within the preprocessing for graph reduction.


\item We propose a new construction heuristic based on short random walks to have a high-quality initial lower bound 
than the general greedy construction heuristic based on vertex degrees. 
We also demonstrate its generality by applying it to $s$-defective clique problems. 


\item 
By employing our proposed upper bound and construction heuristic, we propose a new 
BnB algorithm for the MBP. 
Extensive experiments across various $s$ values and diverse datasets show the superiority of our algorithm over existing baselines. 
\end{itemize}


\section{Preliminaries}
\label{sec-Pre}

\subsection{Definitions}
\label{sec-def}
Let $G = (V, E)$ be an undirected graph, where $V$ is the set of vertices and $E$ the set of edges, $n=|V|$ and $m = |E|$ are the number of vertices and edges, respectively. The (vertex)  connectivity of graph $G$, denoted as $\kappa(G)$, is defined as the minimum number of vertices whose removal will result in a disconnected graph or a trivial graph with 
at most one vertex. 
For $S \subseteq V$, the induced subgraph $G[S]$ is called an $s$-bundle if $\kappa(G[S]) \geq |S| - s$ 
where $s$ is a positive integer. It is evident that a $1$-bundle is a clique. 

The set of neighbors of a vertex $v$ in $G$, \ie, adjacent vertices, is denoted as $N_G(v)$, and $|N_G(v)|$ indicates the degree of vertex $v$ in $G$. The common neighbor of two vertices $u$ and $v$ is defined as $N_G(u,v) = N_G(u) \cap N_G(v)$. We further define closed neighbors $N_G[v] = N_G(v) \cup \{v\}$ and $N_G[u,v] = N_G(u,v) \cup \{u,v\}$. 



To better describe our proposed 
upper bound, we define a new structure called $s$-component as follows.

\begin{definition}[$s$-component]
    An $s$-component is a graph where the largest connected subgraph contains at most $s$ vertices.
\end{definition}


\subsection{Prerequisites}

This subsection introduces several lemmas that are widely used and highly useful in solving MBP. 

\begin{lemma}
\label{property-1}
    If $S\subseteq V$ can induce an $s$-bundle in $G=(V,E)$, then for any $\mathcal{P} 
    \subseteq S$, 
    $\mathcal{P}$ can still induce an $s$-bundle.
\end{lemma}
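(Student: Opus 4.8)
The plan is to bound the connectivity of the smaller induced subgraph directly from below by relating its vertex cuts to those of the larger one, rather than arguing about missing edges. Writing the hypothesis explicitly, $G[S]$ being an $s$-bundle means $\kappa(G[S]) \geq |S| - s$, and the goal is to show $\kappa(G[\mathcal{P}]) \geq |\mathcal{P}| - s$ for every $\mathcal{P} \subseteq S$. The guiding observation is the classical monotonicity of vertex connectivity under vertex deletion: removing a vertex can decrease $\kappa$ by at most one. I would turn this into a one-shot bound for arbitrary $\mathcal{P}$ instead of iterating it.

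Concretely, first I would take a minimum vertex cut $X$ of $G[\mathcal{P}]$, so $|X| = \kappa(G[\mathcal{P}])$ and $G[\mathcal{P} \setminus X]$ is disconnected or trivial. The key step is to verify that $X \cup (S \setminus \mathcal{P})$ is then a vertex cut of $G[S]$: since $\mathcal{P} \subseteq S$, these two sets are disjoint, and deleting them from $G[S]$ leaves exactly $G[\mathcal{P} \setminus X]$, which is already disconnected or trivial. This yields $\kappa(G[S]) \leq |X| + |S \setminus \mathcal{P}| = \kappa(G[\mathcal{P}]) + (|S| - |\mathcal{P}|)$. Rearranging and applying the hypothesis gives $\kappa(G[\mathcal{P}]) \geq \kappa(G[S]) - |S| + |\mathcal{P}| \geq (|S| - s) - |S| + |\mathcal{P}| = |\mathcal{P}| - s$, so $G[\mathcal{P}]$ is an $s$-bundle and the lemma follows.

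The main obstacle I anticipate is not the core inequality but the boundary behaviour of the connectivity definition, which also counts deletions that trivialize the graph (leave at most one vertex). I would make sure the cut argument treats the ``trivial'' outcome on equal footing with the ``disconnected'' outcome, so that complete subgraphs are handled by the convention $\kappa(K_p) = p - 1$ and very small sets by $\kappa = 0$ when $|\mathcal{P}| \leq 1$. In those degenerate cases the target $|\mathcal{P}| - s$ is at most $|\mathcal{P}| - 1$, or is non-positive once $|\mathcal{P}| \leq s$, so the desired bound holds vacuously; I would state this explicitly so the displayed inequality remains valid across all cases.
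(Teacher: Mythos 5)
Your proof is correct. Note that the paper itself does not prove this lemma at all: it simply asserts that it ``can be inferred by the hereditary property of graph theory'' and cites Pattillo et al.\ (2013), so there is no in-paper argument to match against. What you have written is, in effect, the standard proof underlying that citation, and your one-shot formulation is clean: augmenting a minimum cut $X$ of $G[\mathcal{P}]$ by the deleted vertices $S \setminus \mathcal{P}$ immediately yields a valid cut of $G[S]$ (valid in the paper's generalized sense, since the residual graph $G[\mathcal{P} \setminus X]$ is disconnected \emph{or} trivial), giving $\kappa(G[S]) \leq \kappa(G[\mathcal{P}]) + |S| - |\mathcal{P}|$ in a single step rather than iterating the classical fact $\kappa(G - v) \geq \kappa(G) - 1$ once per removed vertex. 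The rearrangement $\kappa(G[\mathcal{P}]) \geq (|S| - s) - |S| + |\mathcal{P}| = |\mathcal{P}| - s$ is exactly right, and your explicit handling of the boundary cases --- complete subgraphs via $\kappa(K_p) = p - 1$ under the trivialization convention, and $|\mathcal{P}| \leq s$ where the bound holds vacuously --- is important precisely because the paper's definition of $\kappa$ counts deletions that leave at most one vertex, a point a naive ``disconnecting cut'' argument would miss. In short: your argument is a complete and correct proof of a statement the paper only cites, and it buys the reader a self-contained verification at no extra cost in length.
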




 
Lemma $\ref{property-1}$ can be inferred by 
the hereditary property of graph theory~\cite{pattillo2013clique} and is useful for calculating the upper bound of the size of the maximum $s$-bundle in $G$. 


\begin{lemma}
\label{property-2}
    If $G=(V,E)$ is an $s$-bundle, 
    then $ |N_G(v)| \geq |V| - s $ for any vertex $v \in V$, 
    termed the degree-bound. 
\end{lemma}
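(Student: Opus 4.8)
The plan is to reduce the degree-bound to the classical graph-theoretic fact that the vertex connectivity of any graph never exceeds its minimum degree, and to establish this directly by a cut argument tailored to the definition of $\kappa$ used here. First I would fix an arbitrary vertex $v \in V$ and consider deleting its entire neighborhood $N_G(v)$ from $G$. After this deletion $v$ has lost all of its neighbors, so it forms a connected component on its own.

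Next I would split into two cases according to whether $N_G[v] = V$. If $N_G[v] \neq V$, then some vertex $u \in V \setminus N_G[v]$ survives the deletion, and since $v$ is now isolated, $v$ and $u$ lie in distinct components; hence $G - N_G(v)$ is disconnected. If instead $N_G[v] = V$, then deleting $N_G(v)$ leaves only the single vertex $v$, i.e.\ a trivial graph. In either case, removing the $|N_G(v)|$ vertices in $N_G(v)$ yields a disconnected or trivial graph, which is exactly the event that the definition of connectivity measures.

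From the definition of $\kappa(G)$ as the minimum cardinality of a vertex set whose removal disconnects or trivializes $G$, I would then conclude $\kappa(G) \le |N_G(v)|$. Finally, since $G$ is an $s$-bundle we have $\kappa(G) \ge |V| - s$ by definition, and chaining the two inequalities gives $|N_G(v)| \ge \kappa(G) \ge |V| - s$. As $v$ was arbitrary, the degree-bound holds for every vertex of $G$.

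The only point requiring care is the edge case $N_G[v] = V$, namely a vertex adjacent to all others (as in a complete graph): here the graph cannot actually be disconnected by vertex removal, so the argument must fall back on the ``trivial graph'' clause of the connectivity definition rather than the ``disconnected'' clause. Covering both clauses is the sole subtlety; once the cut set $N_G(v)$ is identified, the remainder is a one-line application of the $s$-bundle hypothesis.
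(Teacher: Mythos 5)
Your proof is correct and is essentially the argument the paper sketches: the paper proves the degree-bound by noting that removing all neighbors of a vertex with more than $s-1$ non-neighbors makes the graph disconnected or trivial, which is exactly your cut argument for $\kappa(G)\le |N_G(v)|$. You merely phrase it directly (minimum degree bounds connectivity, then chain with $\kappa(G)\ge |V|-s$) rather than by contradiction, and you make the $N_G[v]=V$ trivial-graph clause explicit, which the paper leaves implicit.
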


Lemma \ref{property-2} can be proved easily by contradiction~\cite{zhou2022effective}.
It generally asserts that an $s$-bundle allows the absence of at most 
$s-1$ edges for each vertex. Otherwise, removing all neighbors of a vertex with more than $s-1$ non-neighbors can make the graph disconnected or trivial.



\subsection{Identifying an s-Bundle}
The determination of whether a graph $G$ qualifies as an $s$-bundle stands as a pivotal task in solving MBP. Here we introduce 
an existing method~\cite{RDS,zhou2022effective} for this purpose.

For an undirected graph $G = (V, E)$, to determine whether 
$\kappa(G) \geq n - s$, a directed flow graph $H = (U, A)$ needs to be constructed. 
For each vertex $u \in V$, two copies, denoted as $u'$ and $u''$, are created, and an arc $(u', u'')$ is established in $H$. Additionally, for every edge $(u, v) \in E$, two arcs $(u'', v')$ and $(v'', u')$ are generated in $H$. 
The capacity of each arc within $H$ is set to $1$. Consequently, the vertex connectivity $\kappa_G(u, v)$ of two non-adjacent vertices $u$ and $v$ in $G$ equals to the maximum $  (u'', v') $-flow within $H$, and we have $\kappa(G) = \min\limits_{\{u,v\} \in E} \kappa_G(u, v)$. If 
$\kappa(G) \geq |V| - s$ 
holds, then $V$ is an $s$-bundle. 
Note that $G$ is trivially an $s$-bundle for $s \geq |V|$, or $G$ is trivially not an $s$-bundle for $s < |V|$ if it is disconnected.

According to the above method, assessing whether an undirected graph $G = (V, E)$ is an $s$-bundle necessitates a time complexity of 
$O(m n^4)$ when we use the Dinic maxflow algorithm~\cite{dinlcalgorithm} with complexity of $O(m n^2)$ to determine each pairwise maximum flow. Additionally, determining whether a vertex can be added to the current $s$-bundle involves a time complexity of 
$O(m n^3)$. 
Despite that the time complexity is still in polynomial time, the duration required to ascertain whether a vertex set's induced subgraph is an $s$-bundle is substantial. Hence, the graph reduction method to minimize the scale of the problem and the upper bound method to prune the search tree size are crucial.


\section{Paritition-based Upper Bound}
This section introduces our proposed Partition-based Upper Bound (PUB). 
We first present the main idea and definition of PUB and then give an example for illustration. 
In the end, we introduce our proposed Partition-Bound algorithm for calculating the PUB. 

\subsection{Partition-based Bounding Rule}
We first briefly review the core upper bound used in the stat-of-the-art BnB algorithm of MSB~\cite{zhou2022effective}, which is based on graph coloring and denoted as the color-bound. Given a graph $G=(V,E)$, the color-bound divides $V$ into $k$ disjoint independent sets $I_1, \cdots, I_k$ by graph coloring techniques. Then,  each independent set $I_i$ ($1 \leq i \leq k$) can provide at most $\min\{|I_i|, s\}$ vertices for an $s$-bundle according to Lemma \ref{property-2}, and $\sum^k_{i=1} \min\{|I_i|, s\}$ is the color-bound of the size of the maximum $s$-bundle in $G$ based on Lemma~\ref{property-1}.




We observe that the color-bound 
still remains space for improvement. 
Although an independent set $I$ ($|I| > s$) can only contribute at most $s$ vertices, due to the relaxation property of $s$-bundle, the restriction of independent set can also be relaxed. Consequently, 
we consider the relaxation property of the problem and propose a partition-based new upper bound.


\begin{lemma}
\label{lemma-PUB}
    Given $s$, suppose graph $G=(V,E)$ can be partitioned into $k$ disjoint $s$-components, $\mathcal{P}^s_1=(V_1,E_1), \cdots, \mathcal{P}^s_k=(V_k,E_k)$. Then, each $s$-component $\mathcal{P}^s_i$ ($1 \leq i \leq k$) can provide at most $\min\{|V_i|,s\}$ vertices for an $s$-bundle.  Then 
    $PUB(G,s) = \sum^k_{i=1} \min\{|V_i|, s\}$ is an upper bound on the size of the maximum $s$-bundle in $G$. 
\end{lemma}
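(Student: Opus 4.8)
The plan is to reduce the global statement to a single per-part estimate and then recombine by summation. Concretely, let $B \subseteq V$ be any set inducing an $s$-bundle in $G$ (in particular, a maximum one), and write $W_i = B \cap V_i$ for the trace of $B$ on the $i$-th part. Since $V_1,\dots,V_k$ partition $V$, the sets $W_1,\dots,W_k$ partition $B$, so $|B| = \sum_{i=1}^k |W_i|$. Hence everything comes down to the per-component bound $|W_i| \le \min\{|V_i|, s\}$ for each $i$; summing it over $i$ immediately yields $|B| \le \sum_{i=1}^k \min\{|V_i|, s\} = PUB(G,s)$, which is precisely the asserted upper bound. So the work is entirely in the ``at most $\min\{|V_i|,s\}$ vertices'' claim, which is the first sentence of the lemma.

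For that per-component bound, the inequality $|W_i| \le |V_i|$ is immediate from $W_i \subseteq V_i$, so the substance is $|W_i| \le s$. I would prove this by showing that $G[W_i]$ is \emph{simultaneously} an $s$-component and an $s$-bundle. First, $G[W_i]$ is an induced subgraph of the $s$-component $\mathcal{P}^s_i = G[V_i]$; since deleting vertices can only split connected components and never merge them, every connected component of $G[W_i]$ is contained in a component of $G[V_i]$ and therefore has at most $s$ vertices, so $G[W_i]$ is again an $s$-component. Second, because $W_i \subseteq B$ and $B$ induces an $s$-bundle, the hereditary property of Lemma~\ref{property-1} gives that $G[W_i]$ is itself an $s$-bundle.

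The crux is then the incompatibility of these two properties on a large vertex set. Suppose toward a contradiction that $|W_i| > s$. Viewed as an $s$-component on more than $s$ vertices, $G[W_i]$ cannot fit all its vertices into one connected component (each holds at most $s$), so it is disconnected and $\kappa(G[W_i]) = 0$; but viewed as an $s$-bundle, $\kappa(G[W_i]) \ge |W_i| - s \ge 1$, a contradiction. Hence $|W_i| \le s$. This connectivity argument is exactly what lets us relax the independent sets of the color-bound into the larger $s$-components: whereas the color-bound invokes the degree-bound of Lemma~\ref{property-2}, here the bound follows directly from $\kappa$. The main thing to check carefully is that the two structural facts really refer to the \emph{same} graph $G[W_i]$ --- the edges inherited from $B$ (used for heredity) and those inherited from $V_i$ (used for the $s$-component property) coincide, since both are just the edges of $G$ induced on $W_i$ --- together with the boundary behaviour of $\kappa$ under the paper's definition (a disconnected graph has connectivity $0$, while a connected graph on at least two vertices has connectivity at least $1$), so that the contradiction $0 \ge 1$ is genuinely forced whenever $|W_i| \ge s+1 \ge 2$.
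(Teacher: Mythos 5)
Your proposal is correct and follows essentially the same route as the paper: you decompose a candidate $s$-bundle $B$ along the partition as $W_i = B \cap V_i$, invoke the hereditary property (Lemma~\ref{property-1}) to make each $G[W_i]$ an $s$-bundle, and derive $|W_i| \le s$ from the fact that any subgraph of an $s$-component on more than $s$ vertices is disconnected, hence has $\kappa = 0 < |W_i| - s$. Your write-up is in fact slightly more careful than the paper's (explicit contradiction, the check that the induced edge sets coincide, and the boundary behaviour of $\kappa$), and it silently corrects a typo in the paper's final display, where $|F| = \sum_{i=1}^k |F\cap V_i| \le \sum_{i=1}^k \min\{|V_i|,s\}$ should read $\le$ rather than $\ge$.
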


\begin{proof}
   For each $s$-component $\mathcal{P}^s_i$ ($1 \leq i \leq k$), if $|V_i| \leq s$, $V_i$ can provide at most $|V_i|$ vertices for an $s$-bundle. 
   If $|V_i| > s$, since the size of the largest connected subgraph in $\mathcal{P}^s_i$ is at most $s$, any subgraph $G'$ of $\mathcal{P}^s_i$ with more than $s$ vertices is disconnected, so we have $\kappa(G') = 0 < |G'| - s$, $G'$ is not an $s$-bundle, and the largest $s$-bundle in $\mathcal{P}^s_i$ contains at most $s$ vertices. Therefore, each $s$-component $\mathcal{P}^s_i$ can provide at most $\min\{|V_i|,s\}$ vertices for an $s$-bundle.
   Suppose $G[F]$ ($F \subseteq V$) is an $s$-bundle, then $F$ can be partitioned into $k$ disjoint set $F=\{F\cap V_1, ..., F\cap V_k\}$. According to Lemma~\ref{property-1}, $G[F\cap V_i]$ ($1 \leq i \leq k$) is also an $s$-bundle. Thus, we have $|F| = \sum_{i=1}^k{|F\cap V_i|} \geq \sum_{i=1}^k{\min\{|V_i|,s\}}$.
\end{proof}


Lemma \ref{lemma-PUB} presents the conception of our proposed PUB, which is somewhat tight because a graph that is not an $s$-component must contain a connected subgraph with more than $s$ vertices, which is an $s$-bundle with more than $s$ vertices. 
The PUB enables the expansion of a maximal independent set (\ie, not contained by any other independent set) into a maximal $s$-component. 
In other words, when compared to the color-bound approach, PUB requires no more sets to partition the graph. It is evident that PUB provides a more restrictive and efficient bound than the color-bound.
Moreover, an independent set can be regarded as a special 1-component, which fits well with the 1-bundle (\ie, clique), and our defined $s$-component is tailored for the $s$-bundle. 

\begin{figure}[t]
\centering
\subfigure[Partition of color-bound]{
\includegraphics[width= 0.4\columnwidth]{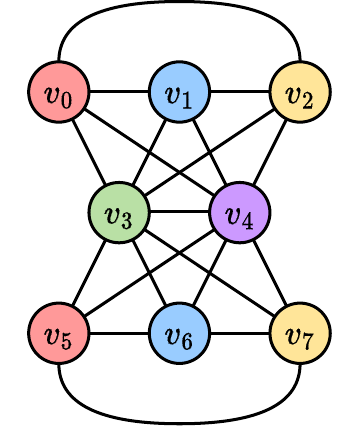}
\label{fig-color-bound}\hspace{1em}}
\subfigure[Partition of PUB]{
\includegraphics[width=0.4\columnwidth]{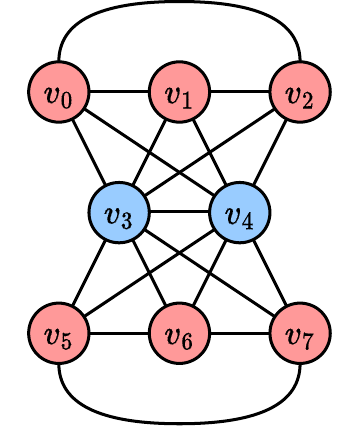}
\label{fig-partition-bound}}
\vspace{-1em}\caption{
Partitions for the maximum $3$-bundle problem.\vspace{-0.5em}}
\label{fig-bound}
\end{figure}

\begin{algorithm}[t]
\fontsize{10.1pt}{15}
\caption{Partition-Bound$(G,s)$}
\label{alg-parBound}
\LinesNumbered 
\KwIn{Graph $G=(V,E)$, positive integer $s$}
\KwOut{$PUB(G,s)$}
initialize the upper bound $UB \leftarrow 0$\;
\While {$V$ is not empty} {
    $I \leftarrow \emptyset$, $V'\leftarrow V$\;
    \While {$V'$ is not empty} {
        select a vertex $u$ in $V'$\;
        $V' \leftarrow V'\backslash N_G[u]$, $V \leftarrow V\backslash \{u\}$\;
        $I \leftarrow I \cup \{u\}$\;
    }
    \For{$u \in V$}{
        \If{$G[I \cup \{u\}]$ remains an $s$-component}{
            $V \leftarrow V\backslash \{u\}$, $I \leftarrow I \cup \{u\}$\;
        }
    }
    $UB \leftarrow UB+\min\{|I|, s\}$\;
}

\textbf{return} $UB$\;
\end{algorithm}

We provide an example to show how the color-bound and PUB are calculated. 
The task is to calculate an upper bound on the size of the maximum $3$-bundle in the graph $G$ shown in Figure \ref{fig-bound}. 
In Figure \ref{fig-color-bound}, the color-bound divides $G$ into 5 disjoint independent sets, $I_1 = \{v_0,v_5\}, I_2 = \{v_1,v_6\}, I_3 = \{v_2,v_7\}, I_4 = \{v_3\}$ and $I_5 = \{v_4\}$, the color-bound is $\sum_{i=1}^5{\min\{|I_i|,3\}} = 8$. 
However, the PUB can expand $G[I_1]$ and $G[I_4]$ to $3$-components.
Then, $G$ can be divided into 2 disjoint 3-components as Figure \ref{fig-partition-bound}, $\mathcal{P}^3_1 = (V_1,E_1)$ and $\mathcal{P}^3_2 = (V_2,E_2)$, where $V_1 = \{v_0,v_1,v_2,v_5,v_6,v_7\}$ and $V_2 = \{v_3,v_4\}$, and we have $PUB(G,3) = \sum_{i=1}^2{\min\{|V_i|,3\}} = 5$. 
In fact, the size of the maximum $3$-bundle in $G$ is exactly 5. The maximum $3$-bundles include the graph induced by $\{v_0,v_1,v_2,v_3,v_4\}$ and the graph induced by $\{v_0,v_1,v_3,v_4,v_6\}$.

\subsection{Calculation of the PUB}
We propose a Partition-Bound algorithm to divide the given graph into $s$-components and calculate the PUB, which is outlined in Algorithm \ref{alg-parBound}. 
The algorithm first initializes the upper bound $UB$ (line 1) and then iteratively partitions the input graph $G$ into $s$-components in a greedy manner to calculate the PUB (lines 2-11). 
For constructing each $s$-component, we first find a maximal independent set $I$ within the remaining vertex set $V$ (lines 3-7). 
Subsequently, for each remaining vertex $u \in V$, we identify if adding this vertex to $I$ still preserves that $G[I \cup \{u\}]$ is an $s$-component. If so, $u$ will be added to $I$ (lines 8-10). 
Finally, according to the PUB, we append $\min\{|I|, s\}$ to $UB$ (line $11$).
Note that we do not construct the $s$-component directly but try to enlarge a maximal independent set, which can usually lead to a larger $s$-component and make our PUB tighter than the color-bound.

Our approach employs the Disjoint Set Union method~\cite{tarjan1979class} to efficiently assess whether the insertion of a vertex preserves an $s$-component. 
By employing the method, the average time complexity for each query and merge operation is $O(\alpha(n))$, where $O(\alpha(n))$ represents the inverse of the Ackermann function, and can be considered as a constant time complexity~\cite{fredman1989cell}. Moreover, the time complexity of Algorithm \ref{alg-parBound} is dominated by the process of constructing the maximal $s$-component (lines 9-10), which needs to traverse each vertex $u \in V$ and query all its neighbors. Thus, its time complexity is $O(D|V|^2)$, where $D$ is the maximum degree of vertices in $G$.

\section{The Proposed \name Algorithm}




\subsection{Main Framework}

The main framework of our \name algorithm is outlined in Algorithm~\ref{alg-Framework}. Initially, \name calls the \texttt{GenerateLB} function to construct an initial $s$-bundle whose size is the initial lower bound $LB$ of 
maximum $s$-bundle (line 1). 
With the initial $LB$ value, the \texttt{Reduce} function is called to 
do graph reduction by removing vertices and edges that must not be in the maximum $s$-bundle (line 2), and the reduced graph is passed to the \texttt{BnB} function to identify the maximum $s$-bundle (line 3). During the BnB process, $LB$ will be updated whenever a larger $s$-bundle is discovered. Upon the completion of traversing the entire search tree by the \texttt{BnB} function, the final $LB$ value is returned as the size of the maximum $s$-bundle in $G$.


\subsection{Lower Bound Initialization}

The initial lower bound plays an important role in our BnB algorithm. A tighter lower bound can help reduce more vertices during preprocessing and prune more branching nodes in the BnB stage. The \texttt{GenerateLB} function in \name constructs a maximal $s$-bundle and regards its size as the initial lower bound. The algorithm first constructs a maximal clique in the input graph $G$ by iteratively adding a vertex with the largest degree and removing its non-neighbors from $G$. Then, we use the random walk technique~\cite{pearson1905problem}, which is widely used for detecting dense communities in social networks, to expand the maximal clique to a maximal $s$-bundle. 
Specifically, we iteratively employ a three-step lazy random work and expand the $s$-bundle by moving a node with the highest value from the candidate set. The lazy random work assigns unit weight to each vertex inside the current $s$-bundle and at each step it spreads half of the node weights to the neighbors.

The \texttt{GenerateLB} function with such lazy random walk can help the algorithm obtain a high-quality lower bound. 
Furthermore, experimental findings reveal that the random walk outperforms the greedy construction of the maximal $s$-bundle based on vertex degree. This superiority extends to a broader applicability, including generalizations for other relaxation clique problems.

\subsection{Preprocessing}
\label{sec-process}

Preprocessing is important in addressing massive sparse graphs. Given the initial lower bound $LB$, we can remove vertices and edges that must not be in any $s$-bundle with a size larger than $LB$. In this subsection, we first introduce some rules for identifying the removable vertices. The first two rules are based on the degree-bound introduced in Lemma~\ref{property-2}. 


\begin{algorithm}[t]
\fontsize{10.1pt}{15}
\caption{SCP$(G,s)$}
\label{alg-Framework}
\LinesNumbered 
\KwIn{Graph $G=(V,E)$, positive integer $s$}
\KwOut{Size of the maximum $s$-bundle in $G$}
 $LB \leftarrow$ GenerateLB$(G,s)$\;
$G\leftarrow$ Reduce$(G,s,LB)$\;
$LB\leftarrow$ BnB$(G,s,\emptyset,V,LB)$\; 
\textbf{return} $LB$\;
\end{algorithm}

\begin{rul}
\label{rule-1}
    Remove vertex $v$ from $G$ if it meets the condition $|N_G(v)|\leq LB - s$.
\end{rul}

\begin{rul}
\label{rule-2}
    Remove edge $(u,v)$ from $G$ if it meets the condition $|N_G(u,v)|\leq LB - 2s$.
\end{rul}

Rules~\ref{rule-1} and~\ref{rule-2} are straightforward since each vertex in an $s$-bundle has at most $s-1$ non-neighbors in the $s$-bundle. Based on our proposed PUB, we further propose a new rule as follows.

\begin{rul}
\label{rule-3}
    Remove vertex $v$ from $G$ if it meets the condition $PUB(G[N_G(v)],s)\leq LB - s$.
\end{rul}

\begin{proof}
    $PUB(G[N_G(v)],s)$ is an upper bound of the size of the maximum $s$-bundle in $G[N_G(v)]$, and an $s$-bundle containing $v$ can contain up to $s-1$ non-neighbors of $v$ according to Lemma~\ref{property-2}. Then, according to Lemma~\ref{property-1}, $PUB(G[N_G(v)],s) + s$ is an upper bound of the maximum $s$-bundle containing $v$.  
\end{proof}

\begin{algorithm}[t]
\fontsize{10.1pt}{15}
\caption{Reduce$(G,s,LB)$}
\label{alg-Preprocessing}
\LinesNumbered 
\KwIn{Graph $G=(V,E)$, positive integer $s$, lower bound $LB$}
\KwOut{Reduced graph $G$}
\While{true} {
$G'\leftarrow remove\_vertex\_with\_Rule1(G,s,LB)$ \;
$G'\leftarrow remove\_edge\_with\_Rule2(G',s,LB)$ \;
\lIf{$G'$ and $G$ is the same}{\textbf{break}}
\textbf{else} $G \leftarrow G'$\;
}
$G\leftarrow remove\_vertex\_with\_Rule3(G,s,LB)$ \;
\While{true} {
$G'\leftarrow remove\_edge\_with\_Rule2(G,s,LB)$ \;
$G'\leftarrow remove\_vertex\_with\_Rule1(G',s,LB)$ \;
\lIf{$G'$ and $G$ is the same}{\textbf{break}}
\textbf{else} $G \leftarrow G'$\;
}
\textbf{return} $G$\;
\end{algorithm}

Algorithm~\ref{alg-Preprocessing} depicts the procedure of the \texttt{Reduce} function that represents the preprocessing method. 
Since Rules~\ref{rule-1} and~\ref{rule-2} are computationally efficient, we first use them to quickly remove vertices that are simple to identify (lines 1-5) and then use Rule~\ref{rule-3} to focus on the ones that are difficult to identify (line 6). 
After reducing the graph by Rule~\ref{rule-3}, we further apply Rules~\ref{rule-2} and \ref{rule-1} alternatively until the graph cannot be further reduced. 
Actually, PUB can also be used for reducing each edge $(u,v)$ satisfying that $PUB(G[N_G(u,v)],s) \leq LB-2s$. 
We did not implement it since identifying the PUB $|E|$ times is heavily time-consuming.


\subsection{The Branch-and-Bound Process}
The BnB process in our \name algorithm is shown in Algorithm~\ref{alg-BnB}, where $S$ represents the set of vertices in the current growing $s$-bundle and $C$ is the corresponding candidate vertex set of $S$. Initially, the algorithm updates the best solution found so far (line 1) and applies Rule \ref{rule-1} together with the following rules derived from the degree-bound to reduce the candidate set (line 2). Actually, both Rule \ref{rule-5} and Rule \ref{rule-6} are removing vertices from the candidate set that cannot remain an $s$-bundle if added to $S$.


\begin{rul}
\label{rule-5}
    Remove vertex $u\in C$ from $C$ if it meets the condition 
    $|N_{G[S\cup \{u\}]}(u)| \leq |S| - s$.
\end{rul}

\begin{rul}
\label{rule-6}
    For each vertex $v\in S$ satisfying that $|N_{G[S]}(v)| = |S| - s$, remove vertex $u \in C$ that is not adjacent to $v$.
\end{rul}

Then, our proposed PUB is used to try to prune the current branching node (line 3), and the branching rules in MSB~\cite{zhou2022effective} are used in our algorithm to decide the subsequent branching nodes. 
The algorithm 
explores branches based on a special vertex $u_p$ that has the minimum degree in $G[S \cup C]$ (line 4).

When graph $G[S \cup C]$ is clearly not an $s$-bundle (lines 5-19), 
if $u_p \notin S$, the algorithm generates a branch that removes $u_p$ from $C$ (line 7) 
and 
then moves $u_p$ from $C$ to $S$ (line 8). 
Once $u_p$ is added to $S$, indicating that 
at most $t = s - 1 - |S \backslash N_{G'}[u_p]|$ of its non-neighbors can be added to $S$ (line 9). 
Thereafter, a multi-branching method with $t+1$ branches is used to branch on its non-neighbors $\{v_1,\cdots,v_c\}$ and prune some branches in advance, which executes in an incremental manner (lines 10-19). 
Specifically, the branches contains one branch that removes $v_1$ from $C$ (lines 13-14), $t-1$ branches that adds the first $i-1$ ($i \in \{2,\cdots,t\}$) non-neighbors of $u_p$ from $C$ to $S$ and further removes the $i$-th non-neighbor of $u_p$ from $C$ (lines 15-17), and one branch that moves the first $t$ non-neighbors of $u_p$ from $C$ to $S$ and further removes all its non-neighbors from $C$ (lines 18-19).

When graph $G[S \cup C]$ is possibly an $s$-bundle (lines 20-25), 
we first updates the LB if it is an $s$-bundle (lines 21-22).
Otherwise, we find a vertex $u\in C$ with the minimum degree in $G[S\cup C]$ (line 23), and use the binary branching method to branch on vertex $u$ (lines 24-25).

\begin{algorithm}[t]
\fontsize{10.1pt}{15}
\caption{BnB$(G,s,S,C,LB)$}
\label{alg-BnB}
\LinesNumbered 
\KwIn{Graph $G=(V,E)$, positive integer $s$, set of vertices in the current solution $S$, candidate set $C$, lower bound $LB$}
\KwOut{Size of the maximum $s$-bundle in $G$}
\lIf{$|S| > LB$}{$LB\leftarrow |S|$}
Reducing $C$ with Rules 1, 4, and 5\;
\lIf{$PUB(G[S\cup C], s) \leq LB$}{\textbf{return} $LB$}

$G' \leftarrow G[S\cup C], u_p \leftarrow \arg \min_{v\in S\cup C}|N_{G'}(v)|$\;
\eIf{$|N_{G'}(u_p)|< |S\cup C| - s$}{
\If{$u_p \notin S$}{
$LB \leftarrow$ BnB$(G, s, S, C\backslash \{u_p\}, LB)$\;
$S\leftarrow S\cup\{u_p\}, C\leftarrow C\backslash \{u_p\}$\;
}
$t \leftarrow s - 1 - |S \backslash N_{G'}[u_p]|$\;
$c \leftarrow |C \backslash N_{G'}(u_p)|$\;
$\{v_1, \cdots, v_{c}\} \leftarrow C \backslash N_{G'}(u_p)$\;
\For{$i \in \{1,\cdots,t+1\}$}{
\If{$i=1$}{$LB \leftarrow$ BnB$(G, s, S, C\backslash \{v_1\}, LB)$\;}
\ElseIf{$2\leq i \leq t$}{
$LB \leftarrow$ BnB$(G, s, S\cup \{v_1,\cdots,v_{i-1}\}$, 

$C\backslash \{v_1,\cdots,v_{i}\}, LB)$\;
}
\ElseIf{$i = t+1$}{
$LB \leftarrow$ BnB$(G, s, S\cup \{v_1,\cdots,v_{t}\}, C \cap N_{G'}(u_p)\}, LB)$\;
}
}
}
{
\If{$G'$is an $s$-bundle}{
$LB\leftarrow |S\cup C|$, \textbf{return} $LB$\;
}

$u \leftarrow \arg \min_{u\in C}|N_{G'}(u)|$\;
$LB \leftarrow$ BnB$(G, s, S, C\backslash \{u\}, LB)$\;
$LB \leftarrow$ BnB$(G, s, S\cup \{u\}, C\backslash \{u\}, LB)$\;

}
\textbf{return} $LB$\;
\end{algorithm}



\section{Experimental Results}
In this section, we first introduce the benchmarks and algorithms (also called solvers) used in experiments, then present and analyze the experimental results. All the algorithms are implemented in C++ and run on a server using an AMD EPYC 7742 CPU, running Ubuntu 18.04 Linux operation system. The cut-off time for each instance is 3,600 seconds, following the settings in the MSB algorithm~\cite{zhou2022effective}.

\begin{table*}[!t]
\centering
\footnotesize
 \scalebox{0.9}{
\begin{tabular}{l|ccc|ccc|ccc|ccc}\toprule
\multirow{2}{*}{~~$s$} & \multicolumn{3}{c|}{Facebook (114)} & \multicolumn{3}{c|}{RealWorld (102)} & \multicolumn{3}{c|}{DIMACS10 (82)} & \multicolumn{3}{c}{DIMACS2 (80)} \\
    & MSB & RDS & \name        & MSB          & RDS & \name        & MSB & RDS & \name         & MSB        & RDS         & \name        \\ \hline
~~2   & 15  & 32  & \textbf{~~64} & 35           & 27  & \textbf{~~58} & 14  & ~~6   & \textbf{~~28}  & 14         & \textbf{18} & 17          \\
~~3   & ~~6   & ~~2   & \textbf{~~43} & 34           & 15  & \textbf{~~52} & 14  & ~~3   & \textbf{~~22}  & ~~9          & \textbf{13} & ~~9           \\
~~4   & ~~4   & ~~0   & \textbf{~~32} & 32           & ~~7   & \textbf{~~48} & 14  & ~~2   & \textbf{~~22}  & ~~9          & ~~9           & \textbf{10} \\
~~5   & ~~4   & ~~0   & \textbf{~~26} & 31           & ~~4   & \textbf{~~47} & 14  & ~~1   & \textbf{~~24}  & ~~9          & ~~8           & \textbf{10} \\
~~6   & ~~4   & ~~0   & \textbf{~~25} & 31           & ~~4   & \textbf{~~44} & 13  & ~~1   & \textbf{~~22}  & 10         & ~~6           & \textbf{11} \\
~~8   & ~~6   & ~~0   & \textbf{~~21} & 30           & ~~2   & \textbf{~~43} & 12  & ~~1   & \textbf{~~21}  & 10         & ~~3           & \textbf{11} \\
10  & ~~5   & ~~0   & \textbf{~~22} & 29           & ~~1   & \textbf{~~42} & 13  & ~~1   & \textbf{~~20}  & ~~8          & ~~3           & \textbf{~~9}  \\
15  & ~~8   & ~~0   & \textbf{~~16} & 24           & ~~0   & \textbf{~~37} & ~~8   & ~~0   & \textbf{~~16}  & \textbf{~~7} & ~~3           & \textbf{~~7}  \\ \hline
Total & 52  & 34  & \textbf{249}         & 246 & 60  & \textbf{371}         & 102 & 15  & \textbf{175} & 76         & 63          & \textbf{84}
\\\bottomrule
\end{tabular}
 }
\vspace{-0.5em}\caption{Summary of the results of MSB, RDS, and \name on the number of instances  solved within a cut-off time of 3600s across four benchmarks. 
The best results appear in bold.\vspace{-0.5em}}
\label{tab-comp-results}
\end{table*}
\begin{table*}[]
\centering
\footnotesize
\resizebox{\linewidth}{!}{
\begin{tabular}{l|rr|rrrrrrrr|rrrrrrrr} \toprule
\multirow{3}{*}{Instance} &
  \multicolumn{1}{c}{\multirow{3}{*}{$|V|$}} &
  \multicolumn{1}{c|}{\multirow{3}{*}{$|E|$}} &
  \multicolumn{8}{c|}{$s=2$} &
  \multicolumn{8}{c}{$s=8$} \\ \cline{4-19} 
 &
  \multicolumn{1}{c}{} &
  \multicolumn{1}{c|}{} &
  \multicolumn{4}{c|}{\name} &
  \multicolumn{2}{c|}{MSB} &
  \multicolumn{2}{c|}{RDS} &
  \multicolumn{4}{c|}{\name} &
  \multicolumn{2}{c|}{MSB} &
  \multicolumn{2}{c}{RDS} \\
 &
  \multicolumn{1}{c}{} &
  \multicolumn{1}{c|}{} &
  \multicolumn{1}{c}{$|V'|$} &
  \multicolumn{1}{c}{$|E'|$} &
  \multicolumn{1}{c}{Tree} &
  \multicolumn{1}{c|}{Time} &
  \multicolumn{1}{c}{Tree} &
  \multicolumn{1}{c|}{Time} &
  \multicolumn{1}{c}{Tree} &
  \multicolumn{1}{c|}{Time} &
  \multicolumn{1}{c}{$|V'|$} &
  \multicolumn{1}{c}{$|E'|$} &
  \multicolumn{1}{c}{Tree} &
  \multicolumn{1}{c|}{Time} &
  \multicolumn{1}{c}{Tree} &
  \multicolumn{1}{c|}{Time} &
  \multicolumn{1}{c}{Tree} &
  \multicolumn{1}{c}{Time} \\ \hline
socfb-BU10 &
  19700 &
  637528 &
  5605 &
  121757 &
  \textbf{44.36} &
  \multicolumn{1}{r|}{\textbf{520.3}} &
  NA &
  \multicolumn{1}{r|}{NA} &
  NA &
  NA &
  5414 &
  137711 &
  \textbf{105.2} &
  \multicolumn{1}{r|}{\textbf{553.9}} &
  397.9 &
  \multicolumn{1}{r|}{3555} &
  NA &
  NA \\
socfb-Cal65 &
  11247 &
  351358 &
  1288 &
  33884 &
  \textbf{9.192} &
  \multicolumn{1}{r|}{\textbf{101.2}} &
  101.3 &
  \multicolumn{1}{r|}{868.8} &
  NA &
  NA &
  1841 &
  50983 &
  \textbf{33.00} &
  \multicolumn{1}{r|}{\textbf{116.6}} &
  214.8 &
  \multicolumn{1}{r|}{782} &
  NA &
  NA \\
socfb-UCF52 &
  14940 &
  428989 &
  5241 &
  185956 &
  \textbf{80.27} &
  \multicolumn{1}{r|}{\textbf{770.0}} &
  421.4 &
  \multicolumn{1}{r|}{2050} &
  NA &
  NA &
  5195 &
  203128 &
  \textbf{296.5} &
  \multicolumn{1}{r|}{\textbf{1029}} &
  500.6 &
  \multicolumn{1}{r|}{2063} &
  NA &
  NA \\
socfb-UCSB37 &
  14917 &
  482215 &
  810 &
  27649 &
  \textbf{6.518} &
  \multicolumn{1}{r|}{\textbf{102.7}} &
  280.4 &
  \multicolumn{1}{r|}{2817} &
  NA &
  NA &
  627 &
  21724 &
  \textbf{5.193} &
  \multicolumn{1}{r|}{\textbf{58.89}} &
  292.8 &
  \multicolumn{1}{r|}{1955} &
  NA &
  NA \\
socfb-wosn-friends &
  63731 &
  817090 &
  5425 &
  132570 &
  \textbf{483.1} &
  \multicolumn{1}{r|}{\textbf{1911}} &
  NA &
  \multicolumn{1}{r|}{NA} &
  NA &
  NA &
  3698 &
  98791 &
  \textbf{1439} &
  \multicolumn{1}{r|}{\textbf{1914}} &
  NA &
  \multicolumn{1}{r|}{NA} &
  NA &
  NA \\ \hline
ca-dblp-2010 &
  226413 &
  716460 &
  455 &
  13491 &
  \textbf{0.574} &
  \multicolumn{1}{r|}{\textbf{272.9}} &
  NA &
  \multicolumn{1}{r|}{NA} &
  NA &
  NA &
  844 &
  21716 &
  \textbf{2.327} &
  \multicolumn{1}{r|}{\textbf{327.6}} &
  NA &
  \multicolumn{1}{r|}{NA} &
  NA &
  NA \\
ca-dblp-2012 &
  317080 &
  1049866 &
  216 &
  11592 &
  \textbf{0.106} &
  \multicolumn{1}{r|}{\textbf{448.6}} &
  NA &
  \multicolumn{1}{r|}{NA} &
  NA &
  NA &
  216 &
  11592 &
  \textbf{0.160} &
  \multicolumn{1}{r|}{\textbf{396.0}} &
  NA &
  \multicolumn{1}{r|}{NA} &
  NA &
  NA \\
socfb-OR &
  63392 &
  816886 &
  5425 &
  132570 &
  \textbf{483.1} &
  \multicolumn{1}{r|}{\textbf{1663}} &
  NA &
  \multicolumn{1}{r|}{NA} &
  NA &
  NA &
  3698 &
  98791 &
  \textbf{1439} &
  \multicolumn{1}{r|}{\textbf{1339}} &
  NA &
  \multicolumn{1}{r|}{NA} &
  NA &
  NA \\
socfb-Penn94 &
  41536 &
  1362220 &
  6021 &
  162390 &
  \textbf{30.20} &
  \multicolumn{1}{r|}{\textbf{1528}} &
  NA &
  \multicolumn{1}{r|}{NA} &
  NA &
  NA &
  4054 &
  117357 &
  \textbf{57.35} &
  \multicolumn{1}{r|}{\textbf{771.1}} &
  NA &
  \multicolumn{1}{r|}{NA} &
  NA &
  NA \\
web-BerkStan &
  12305 &
  19500 &
  8816 &
  16011 &
  \textbf{41.98} &
  \multicolumn{1}{r|}{\textbf{8.958}} &
  59.52 &
  \multicolumn{1}{r|}{370.3} &
  NA &
  NA &
  8816 &
  16011 &
  \textbf{109.2} &
  \multicolumn{1}{r|}{\textbf{26.98}} &
  167.9 &
  \multicolumn{1}{r|}{460.9} &
  NA &
  NA \\ \hline
bio-pdb1HYS &
  36417 &
  36417 &
  189 &
  4341 &
  \textbf{0.322} &
  \multicolumn{1}{r|}{\textbf{0.750}} &
  8.889 &
  \multicolumn{1}{r|}{2.936} &
  NA &
  NA &
  60 &
  1632 &
  \textbf{10.76} &
  \multicolumn{1}{r|}{\textbf{2.699}} &
  498.4 &
  \multicolumn{1}{r|}{28.65} &
  NA &
  NA \\
chesapeake &
  39 &
  170 &
  12 &
  43 &
  \textbf{0.026} &
  \multicolumn{1}{r|}{\textbf{0.001}} &
  0.177 &
  \multicolumn{1}{r|}{\textbf{0.001}} &
  0.803 &
  0.004 &
  26 &
  119 &
  \textbf{0.087} &
  \multicolumn{1}{r|}{\textbf{0.002}} &
  0.293 &
  \multicolumn{1}{r|}{0.004} &
  14556 &
  151.1 \\
delaunay\_n12 &
  4096 &
  12264 &
  4096 &
  12264 &
  \textbf{16.51} &
  \multicolumn{1}{r|}{\textbf{4.610}} &
  17.6 &
  \multicolumn{1}{r|}{16.13} &
  5567 &
  580.2 &
  4046 &
  12114 &
  \textbf{24.89} &
  \multicolumn{1}{r|}{\textbf{3.782}} &
  27.08 &
  \multicolumn{1}{r|}{16.57} &
  NA &
  NA \\
fe-sphere &
  16386 &
  49152 &
  16386 &
  49152 &
  \textbf{23.85} &
  \multicolumn{1}{r|}{\textbf{11.46}} &
  179.6 &
  \multicolumn{1}{r|}{909.8} &
  NA &
  NA &
  16386 &
  49152 &
  \textbf{331.6} &
  \multicolumn{1}{r|}{\textbf{62.56}} &
  332.6 &
  \multicolumn{1}{r|}{883.1} &
  NA &
  NA \\
rgg\_n\_2\_17\_s0 &
  131070 &
  728753 &
  135 &
  936 &
  \textbf{0.219} &
  \multicolumn{1}{r|}{\textbf{22.51}} &
  NA &
  \multicolumn{1}{r|}{NA} &
  NA &
  NA &
  654 &
  4643 &
  \textbf{2.082} &
  \multicolumn{1}{r|}{\textbf{43.87}} &
  NA &
  \multicolumn{1}{r|}{NA} &
  NA &
  NA \\ \hline
c-fat200-1 &
  200 &
  1534 &
  90 &
  729 &
  \textbf{0.001} &
  \multicolumn{1}{r|}{\textbf{0.011}} &
  0.704 &
  \multicolumn{1}{r|}{0.019} &
  3.757 &
  0.031 &
  200 &
  1534 &
  \textbf{28.63} &
  \multicolumn{1}{r|}{\textbf{0.954}} &
  32.23 &
  \multicolumn{1}{r|}{0.976} &
  NA &
  NA \\
c-fat200-5 &
  200 &
  8473 &
  116 &
  4147 &
  \textbf{0.148} &
  \multicolumn{1}{r|}{0.147} &
  2.164 &
  \multicolumn{1}{r|}{\textbf{0.049}} &
  2.565 &
  0.083 &
  200 &
  8473 &
  \textbf{1720} &
  \multicolumn{1}{r|}{\textbf{75.78}} &
  1724 &
  \multicolumn{1}{r|}{81.74} &
  NA &
  NA \\
c-fat500-1 &
  500 &
  4459 &
  140 &
  1351 &
  \textbf{0.294} &
  \multicolumn{1}{r|}{\textbf{0.035}} &
  1.669 &
  \multicolumn{1}{r|}{0.117} &
  19.84 &
  0.301 &
  500 &
  4459 &
  \textbf{350.0} &
  \multicolumn{1}{r|}{\textbf{17.88}} &
  396.0 &
  \multicolumn{1}{r|}{18.55} &
  NA &
  NA \\
c-fat500-10 &
  500 &
  46627 &
  252 &
  19719 &
  \textbf{0.318} &
  \multicolumn{1}{r|}{1.961} &
  8.193 &
  \multicolumn{1}{r|}{\textbf{0.647}} &
  10.37 &
  1.471 &
  500 &
  46627 &
  \textbf{1783} &
  \multicolumn{1}{r|}{\textbf{332.9}} &
  1786 &
  \multicolumn{1}{r|}{373.9} &
  NA &
  NA \\
hamming6-4 &
  64 &
  704 &
  64 &
  480 &
  5.455 &
  \multicolumn{1}{r|}{0.041} &
  NA &
  \multicolumn{1}{r|}{NA} &
  \textbf{4.569} &
  \textbf{0.024} &
  64 &
  480 &
  \textbf{68.40} &
  \multicolumn{1}{r|}{\textbf{0.667}} &
  NA &
  \multicolumn{1}{r|}{NA} &
  NA &
  NA
\\\bottomrule
\end{tabular}
}
\vspace{-0.5em}\caption{Detailed results of MSB, RDS and \name on 40 representative MBP instances from four benchmarks with $s=2$ and $s=8$. For each benchmark of Facebook, RealWorld, DIMACS10, and DIMACS2, we select 5 graphs as ordered and grouped in the table. 
The search tree size is in $10^3$, and the time is in seconds. The best results appear in bold.\vspace{-1em}}
\label{tab-comp-instances3}
\end{table*}

\subsection{Benchmark Datasets}
We evaluate the algorithms on four public datasets that are widely used in studies related to the clique and various relaxation clique problems, including the Facebook\footnote{https://networkrepository.com/socfb.php} dataset that contains 114 massive sparse graphs derived from Facebook social networks, the Realword\footnote{http://lcs.ios.ac.cn/\%7Ecaisw/Resource/realworld\%20\\graphs.tar.gz} dataset that contains 102 massive sparse graphs from the Network Data Repository~\cite{RA15} (the extremely sparse and simple ``scc'' graphs are removed), the DIMACS10\footnote{https://www.cc.gatech.edu/dimacs10/downloads.shtml} dataset that contains 82 graphs with up to $1.05 \times 10^6$ vertices from the 10th DIMACS implementation challenge and is used in MSB~\cite{zhou2022effective}, and the DIMACS2\footnote{http://archive.dimacs.rutgers.edu/pub/challenge/graph/\\benchmarks/clique/} dataset that contains 80 dense graphs with up to 4,000 vertices from the 2nd DIMACS implementation challenge and is also used in MSB.

For each graph, we generate eight MBP instances with $s = 2,3,4,5,6,8,10,15$. Therefore, there are a total of $8 \times (114+102+82+80) = 3024$ MBP instances.

\subsection{Solvers}

To assess the performance of our proposed \name algorithm, we select the state-of-the-art BnB MBP algorithm called MSB~\cite{zhou2022effective} and the advanced and generic algorithm called RDS~\cite{RDS} as our baselines. To evaluate the effect of components in our \name algorithm, we also generate four variant algorithms. 
Details of all the algorithms in our experiments are summarized below.

\begin{itemize}
\item \textbf{MSB}: A BnB MBP algorithm with a coloring-based upper bound and multi-branching method\footnote{https://github.com/joey001/max-s-bundle\label{foot-MSB}}.
\item \textbf{RDS}: A basic BnB algorithm with a binary branching rule as its main framework for various relaxation clique problems, including MBP. 
It shows good performance on instances based on dense DIMACS2 graphs with small $s$ values.  The implemented version from \cite{zhou2022effective} is utilized\textsuperscript{\ref{foot-MSB}}.
\item \textbf{\name}: An implementation of our algorithm
\item \textbf{SCP$_\text{pre}^-$}: A variant of \name without preprocessing.
\item \textbf{SCP$_\text{randwalk}^-$}: A variant of \name that replaces the random walk method with the general greedy construction method, which prefers the vertex most connected to the current solution.
\item \textbf{SCP$_\text{color}^+$}: A variant of \name that replaces the PUB in both the preprocessing and the BnB stages with the color-bound in MSB.
\item \textbf{SCP$_\text{expand}^-$}: A variant of \name that does not expand independent sets into $s$-components. Instead, it starts with an empty $s$-component and sequentially traverses each vertex and try to add it to the $s$-component.
\end{itemize}

\subsection{Performance Comparison}

We first conduct a comprehensive comparison of \name, MBS, and RDS across all four benchmarks to assess their overall performance. 
The results are summarized in Table \ref{tab-comp-results}, providing the number of instances successfully solved by each algorithm within the given cut-off time. 
The instances are categorized based on the $s$ values for each benchmark.

The results reveal that \name significantly outperforms the baseline algorithms, particularly excelling on instances associated with sparse and large graphs. Notably, the RDS algorithm that lacks an upper bound and adheres to simple binary branching rules, experiences a substantial increase in the search tree size with growing $s$ values and graph sizes. It fails to solve most instances within the cut-off time when $s > 3$. Meanwhile, MSB faces challenges due to the lack of a tight enough upper bound and preprocessing method, and such limitation is particularly evident on sparse graphs.
With the benefits of our effective PUB, 
\name exhibits an excellent performance across various benchmarks and $s$ values. 
As a result, \name 
solves 9\%, 50\%, and 70\% more instances than MSB in the DIMACS2, RealWorld, and DIMACS10 benchmarks, respectively. Notably, \name successfully solves four times more instances than MSB in the Facebook benchmark, indicating a significant improvement.

We further provide detailed results for \name, MSB, and RDS in solving 20 representative instances from four benchmarks for $s = 2$ and $s = 8$, as outlined in Table~\ref{tab-comp-instances3}. The presented results include the number of vertices (column $|V|$) and edges (column $|E|$) for each original graph, along with the corresponding number of vertices (column $|V'|$) and edges (column $|E'|$) after the reduction by our preprocessing method. Additionally, the running time in seconds (column \textit{Time}) and the sizes of their entire search trees in $10^3$ (column \textit{Tree}) required to solve the instances are also provided. The symbol `NA' indicates that the algorithm cannot solve the instance within the given cut-off time.

The results show that when solving massive sparse graphs (with more than 40,000 vertices) the preprocessing method in PUB significantly reduces the graph size, enabling our \name to successfully solve them. In contrast, RSB and MSB fail to solve these instances within the given cut-off time.
For dense graphs, such as the DIMACS2 graphs c-fat200-1, c-fat500-1, and hamming6-4, preprocessing does not reduce any vertices or edges when $s=8$. Nevertheless, the BnB process based on PUB can still help \name solve these instances with much shorter running time as compared to the cut-off time. 
Consequently, both the search tree sizes and running time of \name are orders of magnitude smaller than those of MSB and RSB, demonstrating the superior performance of \name in solving instances with both small and large $s$ values.

\begin{figure}[!t]
\centering
\subfigure[
Ablation for preprocessing]{
\includegraphics[width=0.47\columnwidth]{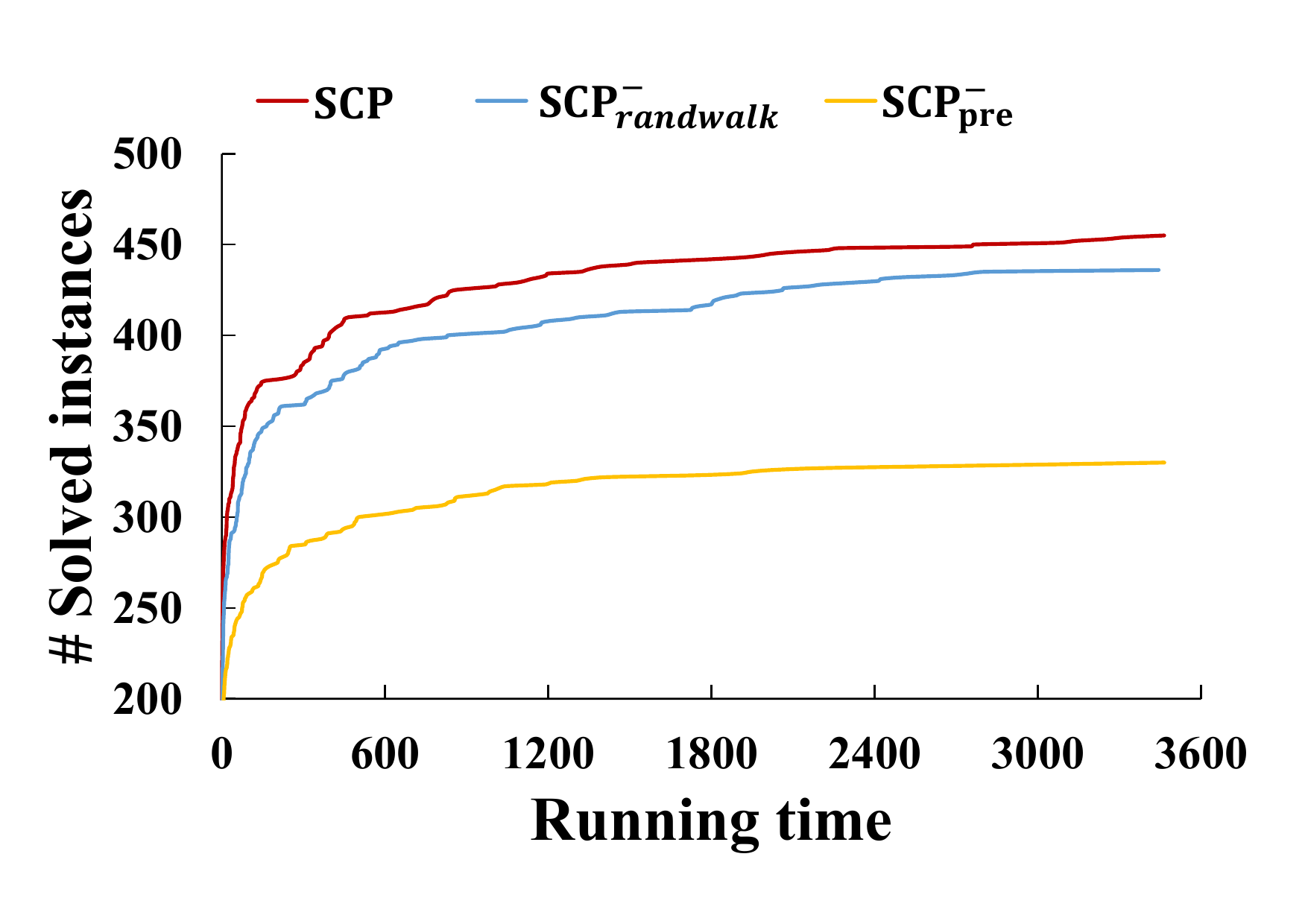}
\label{fig-Ablation1}}
\subfigure[
Ablation for bounds]{
\includegraphics[width=0.47\columnwidth]{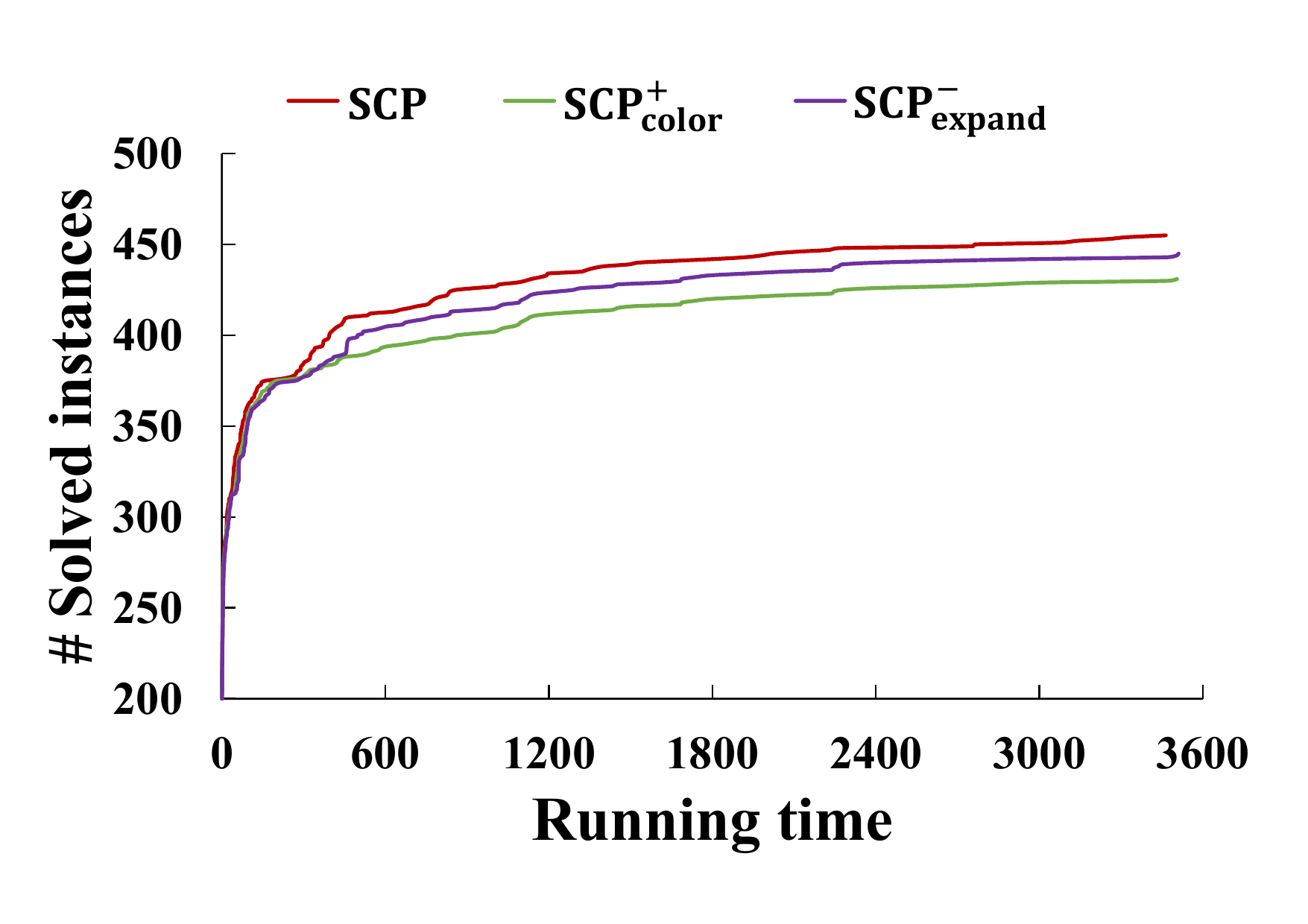}
\label{fig-Ablation2}}
\vspace{-1em}\caption{Ablation studies on the RealWorld and DIMACS2 benchmarks.\vspace{-0.5em}}
\label{fig-Ablation}
\end{figure}
\subsection{Ablation Study}

We perform ablation studies by comparing \name against its four variants grouped in two sets: SCP$_\text{randwalk}^-$ and SCP$_\text{pre}^-$, aiming to assess the effectiveness of the random walk method and the preprocessing method; SCP$_\text{color}^+$ and SCP$_\text{expand}^-$, aiming to assess the effectiveness of the PUB. 
We compare the algorithms on two typical benchmarks, DIMACS2 and RealWorld, which contain dense and massive sparse graphs, respectively, and we collect instances with all the eight $s$ values ($s = 2,3,...,15$ as in Table \ref{tab-comp-results}). 
The results are shown in Figure~\ref{fig-Ablation},
illustrating the evolution in the number of instances solved by each algorithm observed over the duration of running time (in seconds). 

In general, \name yields better performance than the four variants. 
Figure \ref{fig-Ablation1} clearly demonstrates the significant impact of preprocessing, and the employed construction heuristic based on random walk shows better performance than the traditional greedy construction method. 
Results in Figure \ref{fig-Ablation2} indicate that the PUB is more effective than the color-bound in reducing the graph and pruning the branches, and our method of expanding each independent set into a maximal $s$-component is more effective than finding the maximal $s$-component directly in obtaining tighter upper bounds.


\subsection{Generalization of our Lower Bound Method}
\label{sec-exp-LB}
\begin{table}[!t]
\centering
\footnotesize
\resizebox{\linewidth}{!}{
\begin{tabular}{c|c|cccccc}\toprule
Benchmark & \diagbox{Metric}{$s$} & 2    & 3    & 4    & 6    & 8    & 10   \\ \hline
DIMACS2    & $Num_{inc}$      & 67   & 67   & 69   & 71   & 73   & 74   \\
(80)                                & $LB_{inc}$ & 1.00 & 1.60 & 1.83 & 2.58 & 3.34 & 3.96 \\ \hline
RealWorld & $Num_{inc}$      & 52   & 53   & 55   & 59   & 59   & 60   \\
(102)                                & $LB_{inc}$ & 1.00 & 1.55 & 1.62 & 2.24 & 2.71 & 3.17
\\ \bottomrule
\end{tabular}
}
\vspace{-0.5em}\caption{Improvements over KDBB's lower bounds.\vspace{-0.5em}}
\label{tab-comp-results2}
\end{table}

To evaluate the generalization capability of our construction heuristic based on short random walks, we apply our method to generate the initial lower bound for the KDBB algorithm~\cite{KDBB}, a recent BnB algorithm for another practical relaxation clique problem, the Maximum $s$-Defective Clique Problem (MDCP). KDBB also uses a greedy construction heuristic guided by the vertex degree. We compare the methods in MDCP instances with 
various $s$ values on DIMACS2 and RealWorld benchmarks. The results are summarized in Table~\ref{tab-comp-results}, where $Num_{inc}$ indicates the number of instances that our short random walk method obtains a higher lower bound, and $LB_{inc}$ indicates the average increment on lower bounds.







The results show that our method demonstrates the ability to enhance the initial lower bounds in most instances.
Furthermore, with the increase of $s$, both the number of improved instances and the increments on lower bounds witness significant augmentation.
Importantly, for over 95\% of the tested instances, the running time of the random walk is less than 1 second, rendering it negligible in the BnB algorithm. We further leverage the improved lower bounds to enhance the efficiency of the KDBB algorithm, and the results are detailed in the Appendix.

\section{Conclusion}
This paper addressed the NP-hard Maximum $s$-Bundle Problem (MBP), 
a relaxation of the maximum clique problem. 
We proposed a Partition-based Upper Bound (PUB) by considering the relaxation property of $s$-bundle, 
and a heuristic based on short random walks to generate high-quality initial lower bound.
We thereby proposed a new branch and bound (BnB) algorithm that uses PUB in both preprocessing and BnB process, as well as the new intial lower bound. 
Comprehensive experiments employing various values of $s$ demonstrate that our algorithm consistently surpasses existing BnB algorithms. It exhibits superior performance and exceptional robustness across both dense and sparse graphs, resulting in significantly reduced search tree sizes and shorter overall running times.
We also apply the idea of our random walk-based initialization to another relaxation clique problem and demonstrate its generic performance. 
In our future work, we will consider fully utilizing the relaxation property of graphs to address various relaxation clique problems. 



\bibliographystyle{named}
\bibliography{ijcai24}

\appendix

\newpage
\twocolumn[
\begin{@twocolumnfalse}
\section*{\centering{\LARGE{Appendix of ``An Effective Branch-and-Bound Algorithm 
with New Bounding Methods  for the Maximum $s$-Bundle Problem''}}}
~\\
\end{@twocolumnfalse}
]
\begin{table*}[!ht]
\centering
\footnotesize
\scalebox{1}{
\resizebox{\linewidth}{!}{
\begin{tabular}{l|rr|rrrrrrrr|rrrrrrrr}\toprule
\multirow{3}{*}{Instance} &
  \multicolumn{1}{c}{\multirow{3}{*}{$|V|$}} &
  \multicolumn{1}{c|}{\multirow{3}{*}{$|E|$}} &
  \multicolumn{8}{c|}{$s=2$} &
  \multicolumn{8}{c}{$s=8$} \\ \cline{4-19} 
 &
  \multicolumn{1}{c}{} &
  \multicolumn{1}{c|}{} &
  \multicolumn{4}{c|}{KDBB} &
  \multicolumn{4}{c|}{KDBB$_{LB}^+$} &
  \multicolumn{4}{c|}{KDBB} &
  \multicolumn{4}{c}{KDBB$_{LB}^+$} \\
 &
  \multicolumn{1}{c}{} &
  \multicolumn{1}{c|}{} &
  \multicolumn{1}{r}{LB} &
  $|V'|$ &
  $|E'|$ &
  \multicolumn{1}{r|}{Time} &
  \multicolumn{1}{r}{LB} &
  $|V'|$ &
  $|E'|$ &
  Time &
  \multicolumn{1}{r}{LB} &
  $|V'|$ &
  $|E'|$ &
  \multicolumn{1}{r|}{Time} &
  \multicolumn{1}{r}{LB} &
  $|V'|$ &
  $|E'|$ &
  Time \\ \hline
bio-celegans &
  453 &
  2025 &
  9 &
  32 &
  203 &
  \multicolumn{1}{r|}{0.001} &
  \textbf{10} &
  \textbf{18} &
  \textbf{98} &
  \textbf{0.000} &
  9 &
  453 &
  2025 &
  \multicolumn{1}{r|}{0.406} &
  \textbf{11} &
  \textbf{424} &
  \textbf{1923} &
  \textbf{0.241} \\
bio-diseasome &
  516 &
  1188 &
  11 &
  21 &
  100 &
  \multicolumn{1}{r|}{0.000} &
  11 &
  21 &
  100 &
  0.000 &
  11 &
  282 &
  822 &
  \multicolumn{1}{r|}{0.03} &
  \textbf{12} &
  \textbf{222} &
  \textbf{689} &
  \textbf{0.014} \\
bio-dmela &
  7393 &
  25569 &
  7 &
  22 &
  66 &
  \multicolumn{1}{r|}{0.008} &
  \textbf{8} &
  \textbf{7} &
  \textbf{21} &
  \textbf{0.006} &
  7 &
  NA &
  NA &
  \multicolumn{1}{r|}{NA} &
  \textbf{9} &
  NA &
  NA &
  NA \\
bio-yeast &
  1458 &
  1948 &
  6 &
  12 &
  29 &
  \multicolumn{1}{r|}{0.000} &
  \textbf{7} &
  \textbf{0} &
  \textbf{0} &
  0.000 &
  6 &
  1458 &
  1948 &
  \multicolumn{1}{r|}{12.87} &
  \textbf{7} &
  1458 &
  1948 &
  \textbf{12.57} \\
ca-CondMat &
  17903 &
  196972 &
  26 &
  0 &
  0 &
  \multicolumn{1}{r|}{0.004} &
  26 &
  0 &
  0 &
  \textbf{0.002} &
  26 &
  84 &
  987 &
  \multicolumn{1}{r|}{0.006} &
  \textbf{27} &
  \textbf{51} &
  \textbf{619} &
  \textbf{0.002} \\
ca-CSphd &
  1882 &
  1740 &
  3 &
  1882 &
  1740 &
  \multicolumn{1}{r|}{343.7} &
  \textbf{4} &
  \textbf{15} &
  \textbf{20} &
  \textbf{0.000} &
  3 &
  1882 &
  1740 &
  \multicolumn{1}{r|}{55.57} &
  \textbf{5} &
  1882 &
  1740 &
  \textbf{54.11} \\
ca-Erdos992 &
  6100 &
  7515 &
  8 &
  11 &
  46 &
  \multicolumn{1}{r|}{0.001} &
  \textbf{9} &
  \textbf{0} &
  \textbf{0} &
  \textbf{0.000} &
  8 &
  6100 &
  7515 &
  \multicolumn{1}{r|}{876.6} &
  \textbf{10} &
  \textbf{921} &
  \textbf{2305} &
  \textbf{0.938} \\
ca-GrQc &
  4158 &
  13422 &
  44 &
  46 &
  1030 &
  \multicolumn{1}{r|}{0.003} &
  \textbf{45} &
  \textbf{0} &
  \textbf{0} &
  \textbf{0.000} &
  44 &
  46 &
  1030 &
  \multicolumn{1}{r|}{\textbf{0.002}} &
  \textbf{46} &
  0 &
  0 &
  0.003 \\
ca-netscience &
  379 &
  914 &
  9 &
  25 &
  92 &
  \multicolumn{1}{r|}{0.000} &
  9 &
  25 &
  92 &
  0.000 &
  9 &
  379 &
  914 &
  \multicolumn{1}{r|}{0.076} &
  \textbf{10} &
  \textbf{351} &
  \textbf{877} &
  \textbf{0.059} \\
ia-email-EU &
  32430 &
  54397 &
  12 &
  87 &
  1154 &
  \multicolumn{1}{r|}{0.027} &
  \textbf{13} &
  \textbf{32} &
  \textbf{339} &
  \textbf{0.019} &
  12 &
  1070 &
  11020 &
  \multicolumn{1}{r|}{5.427} &
  \textbf{15} &
  \textbf{331} &
  \textbf{4465} &
  \textbf{0.387} \\
ia-email-univ &
  1133 &
  5451 &
  12 &
  6 &
  15 &
  \multicolumn{1}{r|}{0.001} &
  12 &
  6 &
  15 &
  \textbf{0.000} &
  12 &
  341 &
  1562 &
  \multicolumn{1}{r|}{0.054} &
  \textbf{13} &
  \textbf{163} &
  \textbf{751} &
  \textbf{0.01} \\
ia-enron-large &
  33696 &
  180811 &
  18 &
  524 &
  14829 &
  \multicolumn{1}{r|}{2.164} &
  \textbf{19} &
  \textbf{425} &
  \textbf{11329} &
  \textbf{1.075} &
  18 &
  1736 &
  45994 &
  \multicolumn{1}{r|}{49.29} &
  \textbf{22} &
  \textbf{769} &
  \textbf{21837} &
  \textbf{11.11} \\
ia-enron-only &
  143 &
  623 &
  8 &
  18 &
  106 &
  \multicolumn{1}{r|}{0.001} &
  \textbf{9} &
  \textbf{16} &
  \textbf{90} &
  \textbf{0.000} &
  8 &
  143 &
  623 &
  \multicolumn{1}{r|}{0.07} &
  \textbf{11} &
  \textbf{98} &
  \textbf{458} &
  \textbf{0.007} \\
ia-fb-messages &
  1266 &
  6451 &
  5 &
  227 &
  967 &
  \multicolumn{1}{r|}{0.019} &
  \textbf{6} &
  \textbf{24} &
  \textbf{59} &
  \textbf{0.001} &
  5 &
  1266 &
  6451 &
  \multicolumn{1}{r|}{\textbf{2797}} &
  \textbf{7} &
  1266 &
  6451 &
  3149 \\
ia-infect-dublin &
  410 &
  2765 &
  15 &
  19 &
  162 &
  \multicolumn{1}{r|}{0.001} &
  \textbf{16} &
  \textbf{18} &
  \textbf{149} &
  0.001 &
  15 &
  168 &
  1276 &
  \multicolumn{1}{r|}{0.02} &
  \textbf{18} &
  \textbf{81} &
  \textbf{622} &
  \textbf{0.003} \\
ia-infect-hyper &
  113 &
  2196 &
  15 &
  77 &
  1501 &
  \multicolumn{1}{r|}{0.126} &
  \textbf{16} &
  \textbf{68} &
  \textbf{1290} &
  \textbf{0.080} &
  15 &
  108 &
  2117 &
  \multicolumn{1}{r|}{0.537} &
  \textbf{18} &
  \textbf{92} &
  \textbf{1844} &
  \textbf{0.486} \\
ia-reality &
  6809 &
  7680 &
  5 &
  57 &
  169 &
  \multicolumn{1}{r|}{0.001} &
  \textbf{6} &
  \textbf{9} &
  \textbf{26} &
  0.001 &
  5 &
  6809 &
  7680 &
  \multicolumn{1}{r|}{2107} &
  \textbf{7} &
  6809 &
  7680 &
  \textbf{2078} \\
ia-wiki-Talk &
  92117 &
  360767 &
  13 &
  958 &
  40110 &
  \multicolumn{1}{r|}{\textbf{820.6}} &
  \textbf{14} &
  \textbf{819} &
  \textbf{34337} &
  861.2 &
  13 &
  NA &
  NA &
  \multicolumn{1}{r|}{NA} &
  \textbf{16} &
  \textbf{1619} &
  \textbf{61292} &
  \textbf{3290} \\
inf-power &
  4941 &
  6594 &
  6 &
  25 &
  59 &
  \multicolumn{1}{r|}{0.001} &
  6 &
  25 &
  59 &
  0.001 &
  6 &
  4941 &
  6594 &
  \multicolumn{1}{r|}{745.5} &
  \textbf{8} &
  4941 &
  6594 &
  \textbf{722.8} \\
rec-amazon &
  91813 &
  125704 &
  5 &
  12643 &
  21020 &
  \multicolumn{1}{r|}{2307} &
  5 &
  12643 &
  21020 &
  \textbf{2264} &
  5 &
  NA &
  NA &
  \multicolumn{1}{r|}{NA} &
  \textbf{6} &
  NA &
  NA &
  NA \\
rt-retweet &
  96 &
  117 &
  4 &
  19 &
  27 &
  \multicolumn{1}{r|}{0.000} &
  \textbf{5} &
  \textbf{0} &
  \textbf{0} &
  0.000 &
  4 &
  96 &
  117 &
  \multicolumn{1}{r|}{0.015} &
  \textbf{6} &
  96 &
  117 &
  \textbf{0.014} \\
rt-twitter-copen &
  761 &
  1029 &
  4 &
  154 &
  289 &
  \multicolumn{1}{r|}{0.008} &
  \textbf{5} &
  \textbf{22} &
  \textbf{47} &
  \textbf{0.000} &
  4 &
  761 &
  1029 &
  \multicolumn{1}{r|}{1.993} &
  \textbf{6} &
  761 &
  1029 &
  \textbf{1.957} \\
sc-shipsec1 &
  140385 &
  1707759 &
  24 &
  240 &
  2760 &
  \multicolumn{1}{r|}{2.717} &
  24 &
  240 &
  2760 &
  \textbf{1.744} &
  24 &
  8020 &
  130505 &
  \multicolumn{1}{r|}{944.6} &
  \textbf{25} &
  \textbf{6274} &
  \textbf{89762} &
  \textbf{233.3} \\
soc-brightkite &
  56739 &
  212945 &
  36 &
  172 &
  5982 &
  \multicolumn{1}{r|}{1.494} &
  \textbf{37} &
  \textbf{165} &
  \textbf{5666} &
  \textbf{1.240} &
  36 &
  215 &
  8013 &
  \multicolumn{1}{r|}{2.282} &
  \textbf{40} &
  \textbf{178} &
  \textbf{6373} &
  \textbf{1.71} \\
soc-dolphins &
  62 &
  159 &
  5 &
  28 &
  65 &
  \multicolumn{1}{r|}{0.000} &
  \textbf{6} &
  \textbf{11} &
  \textbf{24} &
  0.000 &
  5 &
  62 &
  159 &
  \multicolumn{1}{r|}{0.014} &
  \textbf{7} &
  62 &
  159 &
  0.014 \\
soc-douban &
  154908 &
  327162 &
  11 &
  21 &
  161 &
  \multicolumn{1}{r|}{0.730} &
  \textbf{12} &
  \textbf{17} &
  \textbf{114} &
  \textbf{0.613} &
  11 &
  2801 &
  11064 &
  \multicolumn{1}{r|}{27.35} &
  \textbf{14} &
  \textbf{83} &
  \textbf{474} &
  \textbf{0.958} \\
soc-epinions &
  26588 &
  100120 &
  16 &
  193 &
  3301 &
  \multicolumn{1}{r|}{0.151} &
  \textbf{17} &
  \textbf{138} &
  \textbf{2242} &
  \textbf{0.109} &
  16 &
  617 &
  11005 &
  \multicolumn{1}{r|}{2.193} &
  \textbf{19} &
  \textbf{294} &
  \textbf{5843} &
  \textbf{1.038} \\
socfb-Berkeley13 &
  22900 &
  852419 &
  37 &
  848 &
  26786 &
  \multicolumn{1}{r|}{5.728} &
  \textbf{38} &
  \textbf{776} &
  \textbf{24537} &
  \textbf{4.540} &
  37 &
  1500 &
  46070 &
  \multicolumn{1}{r|}{22.51} &
  \textbf{41} &
  \textbf{1018} &
  \textbf{32385} &
  \textbf{9.341} \\
socfb-CMU &
  6621 &
  249959 &
  44 &
  56 &
  1492 &
  \multicolumn{1}{r|}{0.444} &
  \textbf{45} &
  56 &
  1492 &
  \textbf{0.397} &
  44 &
  297 &
  8363 &
  \multicolumn{1}{r|}{0.564} &
  \textbf{47} &
  \textbf{119} &
  \textbf{3398} &
  \textbf{0.449} \\
socfb-Duke14 &
  9885 &
  506437 &
  30 &
  1327 &
  50634 &
  \multicolumn{1}{r|}{\textbf{404.4}} &
  \textbf{31} &
  \textbf{1189} &
  \textbf{45732} &
  426.4 &
  30 &
  2698 &
  99887 &
  \multicolumn{1}{r|}{1302} &
  \textbf{33} &
  \textbf{1760} &
  \textbf{65836} &
  \textbf{1220} \\
socfb-Indiana &
  29732 &
  1305757 &
  44 &
  1540 &
  54558 &
  \multicolumn{1}{r|}{16.48} &
  \textbf{45} &
  \textbf{1225} &
  \textbf{43354} &
  \textbf{11.72} &
  44 &
  2566 &
  92410 &
  \multicolumn{1}{r|}{88.17} &
  \textbf{48} &
  \textbf{1860} &
  \textbf{65862} &
  \textbf{34.72} \\
socfb-MIT &
  6402 &
  251230 &
  29 &
  1033 &
  32448 &
  \multicolumn{1}{r|}{7.059} &
  \textbf{30} &
  \textbf{877} &
  \textbf{27732} &
  \textbf{5.168} &
  29 &
  1657 &
  60575 &
  \multicolumn{1}{r|}{43.96} &
  \textbf{33} &
  \textbf{1167} &
  \textbf{38982} &
  \textbf{17.38} \\
socfb-OR &
  63392 &
  816886 &
  25 &
  1132 &
  27730 &
  \multicolumn{1}{r|}{6.887} &
  \textbf{26} &
  \textbf{858} &
  \textbf{20832} &
  \textbf{4.683} &
  25 &
  3464 &
  85574 &
  \multicolumn{1}{r|}{166.3} &
  \textbf{29} &
  \textbf{1877} &
  \textbf{45300} &
  \textbf{32.73} \\
socfb-Penn94 &
  41536 &
  1362220 &
  38 &
  609 &
  17020 &
  \multicolumn{1}{r|}{3.861} &
  \textbf{39} &
  \textbf{487} &
  \textbf{13697} &
  \textbf{3.422} &
  38 &
  1889 &
  51910 &
  \multicolumn{1}{r|}{22.22} &
  \textbf{42} &
  \textbf{1106} &
  \textbf{29325} &
  \textbf{6.789} \\
socfb-Stanford3 &
  11586 &
  568309 &
  48 &
  151 &
  5288 &
  \multicolumn{1}{r|}{3.194} &
  \textbf{49} &
  \textbf{148} &
  \textbf{5125} &
  \textbf{3.181} &
  48 &
  562 &
  21246 &
  \multicolumn{1}{r|}{6.657} &
  \textbf{52} &
  \textbf{223} &
  \textbf{7577} &
  \textbf{3.852} \\
socfb-Texas84 &
  36364 &
  1590651 &
  48 &
  897 &
  39618 &
  \multicolumn{1}{r|}{72.82} &
  \textbf{49} &
  \textbf{799} &
  \textbf{35691} &
  \textbf{68.15} &
  48 &
  1416 &
  61781 &
  \multicolumn{1}{r|}{328.2} &
  \textbf{51} &
  \textbf{1192} &
  \textbf{51695} &
  \textbf{291.1} \\
socfb-UCLA &
  20453 &
  747604 &
  51 &
  136 &
  4367 &
  \multicolumn{1}{r|}{1.448} &
  \textbf{52} &
  \textbf{130} &
  \textbf{4043} &
  \textbf{1.233} &
  51 &
  303 &
  10178 &
  \multicolumn{1}{r|}{1.809} &
  \textbf{55} &
  \textbf{143} &
  \textbf{4743} &
  \textbf{1.467} \\
socfb-UConn &
  17206 &
  604867 &
  48 &
  211 &
  6883 &
  \multicolumn{1}{r|}{0.973} &
  \textbf{49} &
  \textbf{144} &
  \textbf{4818} &
  \textbf{0.775} &
  48 &
  301 &
  10108 &
  \multicolumn{1}{r|}{1.454} &
  \textbf{52} &
  \textbf{271} &
  \textbf{8589} &
  \textbf{1.059} \\
socfb-UCSB37 &
  14917 &
  482215 &
  51 &
  290 &
  10008 &
  \multicolumn{1}{r|}{1.756} &
  \textbf{52} &
  \textbf{170} &
  \textbf{6551} &
  \textbf{1.528} &
  51 &
  532 &
  18362 &
  \multicolumn{1}{r|}{3.668} &
  \textbf{55} &
  \textbf{320} &
  \textbf{11603} &
  \textbf{2.593} \\
socfb-UF &
  35111 &
  1465654 &
  51 &
  1568 &
  73133 &
  \multicolumn{1}{r|}{77.78} &
  \textbf{52} &
  \textbf{1526} &
  \textbf{70410} &
  \textbf{65.60} &
  51 &
  2010 &
  93977 &
  \multicolumn{1}{r|}{121.7} &
  \textbf{55} &
  \textbf{1717} &
  \textbf{80359} &
  \textbf{81.98} \\
socfb-UIllinois &
  30795 &
  1264421 &
  54 &
  565 &
  23799 &
  \multicolumn{1}{r|}{12.18} &
  \textbf{55} &
  \textbf{406} &
  \textbf{17969} &
  \textbf{10.58} &
  54 &
  676 &
  30508 &
  \multicolumn{1}{r|}{18.92} &
  \textbf{58} &
  \textbf{602} &
  \textbf{26096} &
  \textbf{12.67} \\
socfb-Wisconsin87 &
  23831 &
  835946 &
  33 &
  875 &
  27116 &
  \multicolumn{1}{r|}{6.475} &
  \textbf{34} &
  \textbf{778} &
  \textbf{23966} &
  \textbf{5.216} &
  33 &
  1863 &
  57527 &
  \multicolumn{1}{r|}{38.56} &
  \textbf{37} &
  \textbf{1057} &
  \textbf{33002} &
  \textbf{11.64} \\
soc-gowalla &
  196591 &
  950327 &
  29 &
  32 &
  489 &
  \multicolumn{1}{r|}{2.132} &
  \textbf{30} &
  \textbf{30} &
  \textbf{434} &
  \textbf{1.706} &
  29 &
  368 &
  10656 &
  \multicolumn{1}{r|}{3.306} &
  \textbf{32} &
  \textbf{103} &
  \textbf{2189} &
  \textbf{2.283} \\
soc-karate &
  34 &
  78 &
  5 &
  12 &
  25 &
  \multicolumn{1}{r|}{0.000} &
  \textbf{6} &
  \textbf{6} &
  \textbf{14} &
  0.000 &
  5 &
  34 &
  78 &
  \multicolumn{1}{r|}{0.003} &
  \textbf{7} &
  34 &
  78 &
  \textbf{0.002} \\
soc-slashdot &
  70068 &
  358647 &
  23 &
  201 &
  7947 &
  \multicolumn{1}{r|}{30.08} &
  \textbf{24} &
  \textbf{189} &
  \textbf{7506} &
  \textbf{28.60} &
  23 &
  306 &
  10976 &
  \multicolumn{1}{r|}{43.18} &
  \textbf{27} &
  \textbf{231} &
  \textbf{9068} &
  \textbf{37.07} \\
soc-wiki-Vote &
  889 &
  2914 &
  6 &
  93 &
  444 &
  \multicolumn{1}{r|}{0.005} &
  \textbf{7} &
  \textbf{27} &
  \textbf{132} &
  \textbf{0.001} &
  6 &
  889 &
  2914 &
  \multicolumn{1}{r|}{28.11} &
  \textbf{9} &
  889 &
  2914 &
  \textbf{22.89} \\
tech-as-caida2007 &
  26475 &
  53381 &
  16 &
  30 &
  359 &
  \multicolumn{1}{r|}{0.009} &
  \textbf{17} &
  \textbf{27} &
  \textbf{304} &
  \textbf{0.005} &
  16 &
  137 &
  2063 &
  \multicolumn{1}{r|}{0.129} &
  \textbf{18} &
  \textbf{78} &
  \textbf{1251} &
  \textbf{0.104} \\
tech-internet-as &
  40164 &
  85123 &
  16 &
  22 &
  219 &
  \multicolumn{1}{r|}{0.018} &
  \textbf{17} &
  22 &
  219 &
  \textbf{0.009} &
  16 &
  232 &
  3258 &
  \multicolumn{1}{r|}{0.176} &
  \textbf{19} &
  \textbf{71} &
  \textbf{1150} &
  \textbf{0.053} \\
tech-p2p-gnutella &
  62561 &
  147878 &
  4 &
  4343 &
  5388 &
  \multicolumn{1}{r|}{104.3} &
  \textbf{5} &
  \textbf{57} &
  \textbf{90} &
  \textbf{0.278} &
  4 &
  NA &
  NA &
  \multicolumn{1}{r|}{NA} &
  \textbf{6} &
  NA &
  NA &
  NA \\
tech-RL-caida &
  190914 &
  607610 &
  17 &
  112 &
  1788 &
  \multicolumn{1}{r|}{0.973} &
  \textbf{18} &
  \textbf{58} &
  \textbf{1016} &
  \textbf{0.735} &
  17 &
  636 &
  9037 &
  \multicolumn{1}{r|}{12.8} &
  \textbf{21} &
  \textbf{203} &
  \textbf{3371} &
  \textbf{4.34} \\
tech-routers-rf &
  2113 &
  6632 &
  16 &
  30 &
  321 &
  \multicolumn{1}{r|}{0.001} &
  \textbf{17} &
  \textbf{21} &
  \textbf{194} &
  \textbf{0.000} &
  16 &
  128 &
  1097 &
  \multicolumn{1}{r|}{0.014} &
  \textbf{18} &
  \textbf{91} &
  \textbf{826} &
  \textbf{0.009} \\
tech-WHOIS &
  7476 &
  56943 &
  47 &
  255 &
  15277 &
  \multicolumn{1}{r|}{\textbf{48.95}} &
  \textbf{48} &
  \textbf{250} &
  \textbf{14918} &
  52.90 &
  47 &
  278 &
  16755 &
  \multicolumn{1}{r|}{95.99} &
  \textbf{51} &
  \textbf{262} &
  \textbf{15751} &
  \textbf{79.06} \\
web-google &
  1299 &
  2773 &
  18 &
  19 &
  170 &
  \multicolumn{1}{r|}{0.000} &
  \textbf{19} &
  \textbf{0} &
  \textbf{0} &
  \textbf{0.000} &
  18 &
  115 &
  796 &
  \multicolumn{1}{r|}{\textbf{0.002}} &
  \textbf{19} &
  \textbf{93} &
  \textbf{686} &
  0.003 \\
web-polblogs &
  643 &
  2280 &
  9 &
  30 &
  253 &
  \multicolumn{1}{r|}{0.003} &
  \textbf{10} &
  \textbf{23} &
  \textbf{177} &
  \textbf{0.002} &
  9 &
  643 &
  2280 &
  \multicolumn{1}{r|}{0.564} &
  \textbf{12} &
  \textbf{122} &
  \textbf{768} &
  \textbf{0.031} \\
web-sk-2005 &
  121422 &
  334419 &
  82 &
  248 &
  10123 &
  \multicolumn{1}{r|}{0.071} &
  \textbf{83} &
  \textbf{246} &
  \textbf{9963} &
  \textbf{0.013} &
  82 &
  250 &
  10279 &
  \multicolumn{1}{r|}{0.096} &
  \textbf{83} &
  \textbf{250} &
  \textbf{10279} &
  \textbf{0.042} \\
web-spam &
  4767 &
  37375 &
  20 &
  108 &
  2282 &
  \multicolumn{1}{r|}{0.167} &
  \textbf{21} &
  \textbf{91} &
  \textbf{1783} &
  \textbf{0.158} &
  20 &
  220 &
  4805 &
  \multicolumn{1}{r|}{7.249} &
  \textbf{21} &
  \textbf{198} &
  \textbf{4350} &
  \textbf{7.205}
\\\bottomrule
\end{tabular}
}
}
\vspace{-0.5em}\caption{Detailed results of KDBB and KDBB$_{LB}^+$ in RealWorld benchmark with $s=2$ and $s=8$. Better results appear in bold.}
\label{tab-appendix-Real}
\end{table*}
\begin{table*}[!ht]
\centering
\footnotesize
\scalebox{1}{
\resizebox{\linewidth}{!}{
\begin{tabular}{l|rr|rrrrrrrr|rrrrrrrr}\toprule
\multirow{3}{*}{Instance} &
  \multicolumn{1}{c}{\multirow{3}{*}{$|V|$}} &
  \multicolumn{1}{c|}{\multirow{3}{*}{$|E|$}} &
  \multicolumn{8}{c|}{$s=2$} &
  \multicolumn{8}{c}{$s=8$} \\ \cline{4-19} 
 &
  \multicolumn{1}{c}{} &
  \multicolumn{1}{c|}{} &
  \multicolumn{4}{c|}{KDBB} &
  \multicolumn{4}{c|}{KDBB$_{LB}^+$} &
  \multicolumn{4}{c|}{KDBB} &
  \multicolumn{4}{c}{KDBB$_{LB}^+$} \\
 &
  \multicolumn{1}{c}{} &
  \multicolumn{1}{c|}{} &
  \multicolumn{1}{r}{LB} &
  \multicolumn{1}{r}{$|V'|$} &
  \multicolumn{1}{r}{$|E'|$} &
  \multicolumn{1}{r|}{Time} &
  \multicolumn{1}{r}{LB} &
  \multicolumn{1}{r}{$|V'|$} &
  \multicolumn{1}{r}{$|E'|$} &
  Time &
  \multicolumn{1}{r}{LB} &
  \multicolumn{1}{r}{$|V'|$} &
  \multicolumn{1}{r}{$|E'|$} &
  \multicolumn{1}{r|}{Time} &
  \multicolumn{1}{r}{LB} &
  \multicolumn{1}{r}{$|V'|$} &
  \multicolumn{1}{r}{$|E'|$} &
  Time \\ \hline
brock200-2.col &
  200 &
  9876 &
  \multicolumn{1}{r}{10} &
  \multicolumn{1}{r}{200} &
  \multicolumn{1}{r}{9876} &
  \multicolumn{1}{r|}{\textbf{1206}} &
  \multicolumn{1}{r}{\textbf{11}} &
  \multicolumn{1}{r}{200} &
  \multicolumn{1}{r}{9876} &
  1245 &
  \multicolumn{1}{r}{22} &
  \multicolumn{1}{r}{NA} &
  \multicolumn{1}{r}{NA} &
  \multicolumn{1}{r|}{NA} &
  \multicolumn{1}{r}{\textbf{25}} &
  \multicolumn{1}{r}{NA} &
  \multicolumn{1}{r}{NA} &
  NA \\
c-fat200-1.col &
  200 &
  1534 &
  12 &
  100 &
  809 &
  \multicolumn{1}{r|}{\textbf{0.004}} &
  12 &
  100 &
  809 &
  0.005 &
  8 &
  200 &
  1534 &
  \multicolumn{1}{r|}{0.038} &
  \textbf{11} &
  200 &
  1534 &
  \textbf{0.032} \\
c-fat200-2.col &
  200 &
  3235 &
  24 &
  46 &
  650 &
  \multicolumn{1}{r|}{\textbf{0.002}} &
  24 &
  46 &
  650 &
  0.003 &
  34 &
  200 &
  3235 &
  \multicolumn{1}{r|}{0.042} &
  \textbf{38} &
  200 &
  3235 &
  \textbf{0.039} \\
c-fat200-5.col &
  200 &
  8473 &
  58 &
  172 &
  6527 &
  \multicolumn{1}{r|}{\textbf{0.018}} &
  58 &
  172 &
  6527 &
  0.020 &
  22 &
  200 &
  8473 &
  \multicolumn{1}{r|}{\textbf{0.057}} &
  \textbf{26} &
  200 &
  8473 &
  0.059 \\
c-fat500-1.col &
  500 &
  4459 &
  14 &
  152 &
  1465 &
  \multicolumn{1}{r|}{0.012} &
  14 &
  152 &
  1465 &
  \textbf{0.010} &
  46 &
  500 &
  4459 &
  \multicolumn{1}{r|}{0.370} &
  \textbf{50} &
  500 &
  4459 &
  \textbf{0.337} \\
c-fat500-10.col &
  500 &
  46627 &
  126 &
  376 &
  31313 &
  \multicolumn{1}{r|}{0.213} &
  126 &
  376 &
  31313 &
  \textbf{0.191} &
  22 &
  500 &
  46627 &
  \multicolumn{1}{r|}{0.779} &
  \textbf{26} &
  500 &
  46627 &
  \textbf{0.745} \\
c-fat500-2.col &
  500 &
  9139 &
  26 &
  284 &
  5215 &
  \multicolumn{1}{r|}{0.034} &
  26 &
  284 &
  5215 &
  \textbf{0.033} &
  9 &
  500 &
  9139 &
  \multicolumn{1}{r|}{\textbf{0.396}} &
  \textbf{12} &
  500 &
  9139 &
  0.398 \\
c-fat500-5.col &
  500 &
  23191 &
  64 &
  190 &
  7970 &
  \multicolumn{1}{r|}{0.051} &
  64 &
  190 &
  7970 &
  \textbf{0.044} &
  22 &
  500 &
  23191 &
  \multicolumn{1}{r|}{\textbf{0.500}} &
  \textbf{26} &
  500 &
  23191 &
  0.503 \\
hamming6-2.col &
  64 &
  1824 &
  32 &
  64 &
  1824 &
  \multicolumn{1}{r|}{\textbf{1.150}} &
  32 &
  64 &
  1824 &
  1.202 &
  18 &
  64 &
  1824 &
  \multicolumn{1}{r|}{\textbf{10.67}} &
  \textbf{22} &
  64 &
  1824 &
  13.65 \\
hamming6-4.col &
  64 &
  704 &
  4 &
  64 &
  480 &
  \multicolumn{1}{r|}{0.026} &
  \textbf{5} &
  64 &
  480 &
  \textbf{0.022} &
  59 &
  64 &
  704 &
  \multicolumn{1}{r|}{\textbf{1.333}} &
  \textbf{63} &
  64 &
  704 &
  1.580 \\
johnson8-2-4.col &
  28 &
  210 &
  4 &
  28 &
  210 &
  \multicolumn{1}{r|}{0.016} &
  \textbf{5} &
  28 &
  210 &
  \textbf{0.014} &
  35 &
  28 &
  210 &
  \multicolumn{1}{r|}{\textbf{0.051}} &
  \textbf{39} &
  28 &
  210 &
  0.055 \\
johnson8-4-4.col &
  70 &
  1855 &
  14 &
  70 &
  1855 &
  \multicolumn{1}{r|}{\textbf{12.64}} &
  14 &
  70 &
  1855 &
  13.22 &
  12 &
  70 &
  1855 &
  \multicolumn{1}{r|}{\textbf{374.7}} &
  \textbf{15} &
  70 &
  1855 &
  388.4 \\
MANN-a9.col &
  45 &
  918 &
  16 &
  45 &
  918 &
  \multicolumn{1}{r|}{\textbf{20.16}} &
  \textbf{17} &
  45 &
  918 &
  20.58 &
  14 &
  45 &
  918 &
  \multicolumn{1}{r|}{\textbf{63.41}} &
  \textbf{17} &
  45 &
  918 &
  64.61 \\
p-hat300-1.col &
  300 &
  10933 &
  8 &
  300 &
  10894 &
  \multicolumn{1}{r|}{\textbf{440.3}} &
  \textbf{9} &
  300 &
  \textbf{10842} &
  440.8 &
  \multicolumn{1}{r}{16} &
  \multicolumn{1}{r}{NA} &
  \multicolumn{1}{r}{NA} &
  \multicolumn{1}{r|}{NA} &
  \multicolumn{1}{r}{\textbf{20}} &
  \multicolumn{1}{r}{NA} &
  \multicolumn{1}{r}{NA} &
  NA
\\\bottomrule
\end{tabular}
}
}
\vspace{-0.5em}\caption{Detailed results of KDBB and KDBB$_{LB}^+$ in DIAMCS2 benchmark with $s=2$ and $s=8$. Better results appear in bold.\vspace{-1em}}
\label{tab-appendix-DIAMCS2}
\end{table*}


In Section 5.5 of the main text, we show that our method of generating the initial lower bound (LB) can also be used to boost the initial lower bound of another relaxation clique problem, the Maximum $s$-Defective Clique Problem (MDCP)~\cite{k-defective-pro,KDBB}. 
Here we further apply our lower bound method based on short random walks~\cite{pearson1905problem} to typical exact algorithms for MDCP, to show 
that our new lower bound could actually boost the algorithm performance. 
The KDBB algorithm~\cite{KDBB} is the state-of-the-art BnB algorithm for MDCP, which uses a greedy construction heuristic guided by the vertex degree to generate the initial LB. We replace the construction heuristic in KDBB with our short random walks-based heuristic and denote the resulting algorithm as KDBB$_{LB}^+$.

We compare KDBB$_{LB}^+$ with KDBB in the RealWorld\footnote{http://lcs.ios.ac.cn/\%7Ecaisw/Resource/realworld\%20\\graphs.tar.gz} and DIMACS2\footnote{http://archive.dimacs.rutgers.edu/pub/challenge/graph/\\benchmarks/clique/} benchmarks, which contain massive sparse and dense graphs, respectively, with $s = 2$ and $s = 8$. The cut-off time is set to 3,600 seconds. The detailed comparison results are summarized in Tables~\ref{tab-appendix-Real} and~\ref{tab-appendix-DIAMCS2}, where columns $V$ and $E$ indicate the number of vertices and edges in the original graph, respectively, columns $V'$ and $E'$ indicate the number of vertices and edges in the reduced graph after the preprocessing stage in the algorithm, respectively, column \textit{LB} indicates the initial LB obtained by the algorithm, and column \textit{Time} indicates the running time in seconds to solve the instance. The symbol `NA' indicates that the algorithm cannot solve the instance within the given cut-off time. Note that our method has never made the initial LB worse in the tested instances. In the tables, we only provide the results of instances solved by at least one algorithm within the cut-off time, and at the same time KDBB$_{LB}^+$ obtains a larger initial LB than KDBB, when $s = 2$ or $s = 8$.



Experimental results indicate that, for the DIMACS2 benchmark shown in Table~\ref{tab-appendix-DIAMCS2}, although the initial LB in KDBB$_{LB}^+$ has exhibited enhancements, especially when $s=8$, it is hard to reduce vertices and edges in dense graphs. Therefore, the overall graph size remains largely unaffected in the majority of computational scenarios, and the running time for solving the instances of the two algorithms is close. 
Conversely, for the RealWorld benchmark detailed in Table~\ref{tab-appendix-Real}, the improvements in the initial LB by our method demonstrate a notable reduction in graph size, consequently enhancing the algorithm efficiency in most instances. For example, when $s=2$, the algorithm's efficiency experiences a substantial boost owing to a significant reduction in ca-CSphd. Furthermore, when $s=8$, KDBB fails to solve ia-wiki-Talk within the cut-off time, whereas KDBB$_{LB}^+$ can solve it successfully. The results demonstrate again the excellent performance and generalization capability of our LB method, especially on sparse graphs which is prevalent in the real world.

\end{document}